\def\Tr{{\rm Tr}}
\newtheorem{theorem}{Theorem}
\newtheorem{definition}{Definition}
\newtheorem{proposition}{Proposition}
\newcommand{\yaqi}[1]{\textcolor{black}{#1}}
\begin{document}

\title{A nonconvex entanglement monotone determining the characteristic length of entanglement distribution in continuous-variable quantum networks}

\author{Yaqi Zhao}
\affiliation{College of Mathematics, Taiyuan University of Technology, Taiyuan, 030024, China}

\author{Jinchuan Hou}
\affiliation{College of Mathematics, Taiyuan University of Technology, Taiyuan, 030024, China}

\author{Kan He}
\email{hekan@tyut.edu.cn}
\affiliation{College of Mathematics, Taiyuan University of Technology, Taiyuan, 030024, China}

\author{Nicolò~Lo~Piparo}
\email{nicopale@gmail.com}
\affiliation{Okinawa Institute of Science and Technology Graduate University,
1919-1 Tancha, Onna-son, Okinawa 904-0495, Japan}

\author{Xiangyi Meng}
\email{xmenggroup@gmail.com}
\affiliation{Department of Physics, Applied Physics, and Astronomy, Rensselaer Polytechnic Institute, Troy, New York 12180, USA}
\affiliation{Network Science and Technology Center, Rensselaer Polytechnic Institute, Troy, New York 12180, USA}

\date{\today }
\begin{abstract}

Quantum networks (QNs) promise to enhance the performance of various quantum technologies in the near future by distributing entangled states over long distances. The first step towards this is to develop novel entanglement measures that are both informative and computationally tractable at large scales. While numerous such entanglement measures exist for discrete-variable (DV) systems, a comprehensive exploration for experimentally preferred continuous-variable (CV) systems is lacking. Here, we introduce a class of CV entanglement measures, among which we identify a nonconvex entanglement monotone---the ratio negativity, which possesses a simple, scalable form that determines the exponential decay of optimal entanglement swapping on a chain of pure Gaussian states. This characterization opens avenues for leveraging statistical physics tools to analyze swapping-protocol-based CV QNs.

\end{abstract}

\pacs{03.67.Mn, 03.65.Ud, 03.67.-a}
\maketitle

\section{Introduction}

Entanglement is a fundamental resource for several quantum information tasks, such as quantum communication~\cite{quantum_comuputation_2007,Repeater_2007,QComm,Quantumcomm,Repeater_Koji_2015,Repeater_Koji_2023,key_distribution_2014,key_distribution_2023,Swapping_experiement2022}, quantum computation~\cite{Qcomp2,Quantum_comp2,Quantum_comp3}, and quantum sensing and imaging~\cite{QSensing,QImaging2,QImaging3}. 
These tasks can be accomplished by using quantum networks (QNs), in which entangled states are created and distributed between remote parties (nodes) through local operations and classical communication (LOCC)~\cite{entanglement_distribution1997}.
At large scales, it is important to explore the global features of entanglement distribution on QNs to see whether such features can be mapped to known statistical physics models.
Towards this goal, several attempts have already been made. For instance, in 2007, an entanglement distribution scheme, known as entanglement percolation~\cite{Percolation_CEP2007}, showed a way to map the distribution of partially entangled pure states to bond percolation theory~\cite{Percolation1980}. Since then, research in entanglement percolation has expanded to include more statistical physics and complex network models~\cite{percolation2008,percolation2009,percolation_complex2009,Percolation_mixed2010,Percolation_multi2010,Percolation_cluster2011,Pecplation_ConPT2021,Percolation_LargeScale2022,Percolation_review2023,path-percolation_mhrk24,path-percolation_UVflower2024,Percolation_QMemo2024}, providing valuable insights on understanding, enhancing, and optimizing the efficiency and reliability of entanglement distribution schemes.

\yaqi{A fundamental quantity often used to characterize the behavior of  statistical physics models is the characteristic length $\xi$, which describes the exponential decay of correlation over a long distance $l$~\cite{characteristic_length1966,characteristic_length1967}.  Similarly, when considering an \emph{entanglement transmission} scheme---the distribution of entanglement between \emph{two} nodes~\cite{DET2023}---we can define a characteristic length $\xi$ to describe the exponential decay of entanglement transmission between two nodes within a QN. Specifically, for two distant nodes, $S$ and $T$, separated by a distance $l$, we define \yaqi{$\xi:= -\lim\limits_{l\rightarrow\infty}\left[{l}/{\ln{E(\rho_{ST}})}\right]$}, where \yaqi{$E(\rho_{{ST}})$ denotes the final entanglement established between $S$ and $T$ as quantified by some entanglement measure $E$.} This characteristic length $\xi$ can vary depending on the entanglement transmission scheme and the measure used. Therefore, determining
$\xi$ can often reveal how fast entanglement can deteriorate in the specific scheme.} 

In this work, we focus on entanglement transmission schemes for a one-dimensional (1D) QN chain. These schemes primarily rely on entanglement swapping protocols~\cite{swapping1993,swapping1999,vanLoock2002,swapping_CV2004,swapping_CV2005_1,swapping_Gaussian2011,Takeda2015,swapping_optimal2022} as their building blocks to distribute entangled pairs between two remote nodes of the QN.
Specifically, entanglement swapping uses two initial entangled pairs (e.g., $S$--$R$ and $R$--$T$) to create a new entangled pair directly between $S$ and $T$ through local operations at a ``relay'' node $R$~\cite{swapping1993,swapping1999,vanLoock2002,swapping_CV2004,swapping_CV2005_1,swapping_Gaussian2011,Takeda2015,swapping_optimal2022}. 
The resulting entanglement $E(\rho_{ST})$ depends on the entanglement of the two initial pairs, $E(\rho_{SR})$ and $E(\rho_{RT})$. 
If we can find a measure $E\in[0,1]$ satisfying the \emph{multiplicative property}, $E(\rho_{ST}) = E(\rho_{SR})\cdot E(\rho_{RT})$ \yaqi{for arbitrary $\rho_{SR}$ and $\rho_{RT}$}, then consider a 1D chain comprising $l$ identical states ($\rho=\rho_{SR_1}=\dots=\rho_{R_{l-1}T}$) as shown in Fig.~\ref{fig-seriesN}, the final entanglement created between $S$ and $T$ through entanglement swapping will simply be $E(\rho_{ST})= E(\rho)^l$. This means that the value $l/\ln E(\rho_{ST})$ is independent of $l$, and hence the characteristic length readily reads {$\xi\propto-\left(\ln E\right)^{-1}$} \yaqi{(the proportionality accounts for the fact that any $E^{\alpha}$ would also satisfy the multiplicative property; thus, the specific entanglement measure used to define $\xi$ should depend on the task at hand). }

\begin{figure}[t!]
    \centering
    \includegraphics[width=200pt]{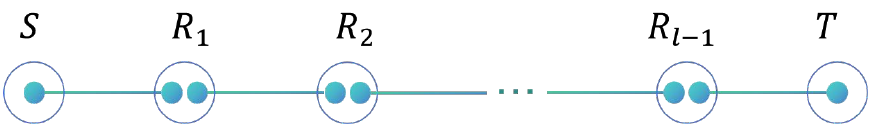}
    \caption{A general one-dimensional quantum network. Each node (circle) contains qubits, qudits, or optical modes (green dots), and links represent entanglement between different nodes.}
    \label{fig-seriesN}
\end{figure}

Here, we \yaqi{derive a necessary and sufficient condition for the existence of the multiplicative property for \emph{any} swapping protocol described by $E(\rho_{ST}) = g(E(\rho_{SR}), E(\rho_{RT}))$ \yaqi{with some generic function $g$}.
We then verify that several entanglement measures that have been proposed, such as the singlet conversion probability~\cite{SCP1999}, the concurrence~\cite{Concurrence2001,Concurrence_convex_roof2003_2,Concurrence2013}, and the more general $G$-concurrence~\cite{G_concurrence2005}, all exhibit this multiplicative property for various known entanglement swapping protocols using discrete-variable (DV) systems (Table~\ref{table-measures}). 
However, such an entanglement measure for continuous-variable (CV) systems has not yet been identified. 
Our main contribution is the discovery that, for two-mode squeezed vacuum states (TMSVSs)---a common class of entangled pure CV states---the corresponding {Gaussian swapping protocol}~\cite{vanLoock2002} does possess such a measure, which we call the \emph{ratio negativity} $\chi_{\mathcal{N}}\in[0,1]$, that satisfies $\chi_{\mathcal{N}}^{\alpha}(\rho_{ST})=\chi_{\mathcal{N}}^{\alpha}(\rho_{SR})\chi_{\mathcal{N}}^{\alpha}(\rho_{RT})$ with $\alpha>0$ (we also refer to $\chi_{\mathcal{N}}^{\alpha}$ as the \emph{$\alpha$-ratio negativity} for consistency).
Similar to how the logarithmic negativity~\cite{Plenio2005} can serve as an upper bound for distillable entanglement, the ($\alpha$-)ratio negativity $\chi_{\mathcal{N}}^{\alpha}$, also \yaqi{derived} from the negativity $\mathcal{N}$~\cite{Vidal2002},  now adds another operational significance to the negativity family---showing that the ($\alpha$-)ratio negativity could be a suitable figure of merit for quantifying entanglement in 1D QNs for CV systems.}

\begin{table*}[t!]
    \caption{\yaqi{Characteristic length $\xi$ of different  entanglement transmission schemes and resources in 1D quantum network (QN).}}
    \begin{tabular}{c|c|c}
		\hline\hline
		\yaqi{QN} resource state & \yaqi{1D entanglement transmission scheme}  & \yaqi{Entanglement measure $(=e^{-1/\xi})$
        } \\
		\hline
            2-qubit DV pure state &  singlet conversion LOCC & singlet conversion probability~\cite{SCP1999}\\
		  2-qubit DV pure state & optimal DV deterministic swapping  for qubits~\cite{percolation2008} & concurrence~\cite{Hill1997}\\
		2-qudit DV pure state & optimal DV deterministic swapping for qudits~\cite{DET2023} & $G$-concurrence~\cite{G_concurrence2005}\\
            TMSVS CV pure state & optimal Gaussian deterministic swapping~\cite{vanLoock2002} & ratio negativity \yaqi{(present work)}\\

		\hline\hline
	\end{tabular}\label{table-measures}
\end{table*}

The $\alpha$-ratio negativity exhibits several key properties. First, it is a nonconvex entanglement monotone for $0< \alpha \leqslant 1$, exhibiting nontrivial characteristics~\cite{EM2000}.
Moreover, the $\alpha$-ratio negativity shows \emph{monogamy}, meaning that as two parties become more strongly entangled, their ability to share $\alpha$-ratio negativity with others decreases~\cite{Monogamy2000}.
We find that when $\alpha\geqslant \log_{3\left(\sqrt{2}-1\right)}2\approx 3.191$, for 
certain classes of multipartite pure states, the $\alpha$-ratio negativity satisfies the CKW inequality proposed by Coffman, Kundu, and Wootters~\cite{Monogamy2000}, a signature of monogamy. 
This result can be readily extended to mixed states within the same multipartite classes by employing the convex-roof extension~\cite{Kim2018}.
Our findings indicate that the $\alpha$-ratio negativity is inherently monogamous in several cases of multipartite settings.

\yaqi{Given an unknown bipartite state $\rho$, a natural question arises: how can one experimentally measure the ratio negativity $\chi_{\mathcal{N}}(\rho)$? Given that the ratio negativity is a variant of the negativity $\mathcal{N}$, any methodology, whether direct or indirect, capable of determining $\mathcal{N}(\rho)$, can be harnessed to derive the ratio negativity. In practice, tomography techniques~\cite{Tomography_qudit2002,Tomography_entanglement_CV2002,quantum-tomography,quantum-state-estimation,Tomography_review2009,Tomography_Gaussian2009,Tomography_Gaussian2010,Tomography_full2016,CV_moment_covariance2017,Tomography_process2020,Tomography_higher_dim_adaptive2020,Tomography_Gaussian2024} are typically used to reconstruct $\rho$, enabling the subsequent calculation of the negativity. Alternatively, machine learning methodologies~\cite{Tomography_measurement_learning2018} can be directly applied to measure $\mathcal{N}(\rho)$.}

\section{Preliminaries}\label{sec-pre}

In this Section, we introduce the basic concepts and definitions that will help the reader to better understand our results.

\emph{CV system.---}A CV system, generally described by $N$ bosonic modes ($N$-mode), is associated with a tensor-product Hilbert space ${\mathcal H}=\otimes_{k=1}^{N}{\mathcal H}_{k}$, where ${\mathcal H}_{k}$ is an infinite-dimensional complex Hilbert space that corresponds to the $k$th mode.
For each Hilbert subspace $\mathcal{H}_k$, we denote $\{|n\rangle\}_{n=0}^{\infty}$ as the Fock basis.
The sets of density matrices of all states and of all pure states defined in $\mathcal H$ are denoted by $\mathbb{S}(\mathcal H)$ and $\mathbb{PS}(\mathcal H)$, respectively.

\emph{Entanglement measures and monotones.---}Assume that $\mathcal H_{AB}=\mathcal{H}_{A}\otimes\mathcal{H}_{B}$ is a bipartite system.
The authors of Refs.~\cite{EM2000,measure2007} proposed the axiomatic theory of entanglement measures~\cite{Axiomatic_def_measure2009}: A function $E$: $\mathbb{S}(\mathcal{H}_{AB})\rightarrow \mathbb{R}_+$ is called an \emph{entanglement measure} iff (i) it vanishes for all separable states of $\mathcal{H}_{AB}$, and (ii) it does not increase under any LOCC, i.e.,
\begin{eqnarray}
\label{eq_measure_general}
    E(\rho)\geqslant E(\Phi(\rho))
\end{eqnarray}
for any LOCC map $\Phi$ that is completely positive and trace preserving~\cite{Axiomatic_def_measure2009}.
In general, $\Phi(\rho)$ is given by an ensemble $\{p_i,\rho_i'\}_i$ made by all possible outcome states $\rho_i'$  with corresponding probabilities $p_i$, meaning that $\rho$ can be transformed by LOCC into $\rho_i'$ with probability $p_i$  
($\sum_i p_i=1$).  Eq.~\eqref{eq_measure_general} then becomes
\begin{eqnarray}
\label{eq_measure}
    E(\rho)\geqslant E\left(\sum_i p_i \rho_i'\right).
\end{eqnarray}
An entanglement measure that satisfies the following condition 
\begin{eqnarray}
\label{eq_monotone}
  E(\rho)\geqslant\sum_j q_j E(\sigma_j')
\end{eqnarray}
for any output ensemble $\{q_j,\sigma_j'\}_j$ is called an \emph{entanglement monotone}~\cite{Axiomatic_def_measure2009}. Note that for the special case of a single outcome $\sigma_1'=\sum_i p_i \rho_i'$ with a probability of unity ($q_1=1$), Eq.~\eqref{eq_monotone} reduces to Eq.~\eqref{eq_measure},
implying that entanglement monotone is a stronger condition than entanglement measure. 
Most entanglement monotones also meet a  stricter requirement, being \emph{convex functions} satisfying
\begin{eqnarray}
\label{eq_convex}
  \sum_j q_j E(\rho_j)\geqslant E\left(\sum_j q_j \rho_j\right).
\end{eqnarray}
The convexity property is, however, not a necessary condition for an entanglement measure to be a monotone~\cite{Plenio2005}.
Hence, one should regard a nonconvex entanglement measure that satisfies the property in Eq.~\eqref{eq_monotone} as a \emph{nonconvex entanglement monotone}.

Two widely used entanglement measures for CV systems are the \emph{negativity} and the \emph{logarithmic negativity}~\cite{Vidal2002,Plenio2005}. Neither increases with LOCC, or with more general positive partial transpose (PPT) preserving maps~\cite{Plenio2005}.
Both the {negativity} and the {logarithmic negativity} are also entanglement monotones~\cite{Plenio2005}. However, while the negativity is a convex function, the logarithmic negativity does not satisfy Eq.~\eqref{eq_convex}.
Specifically, given a matrix $W$ expressed in the Fock basis of $\mathcal{H}_{AB}$, its partial transpose $W^{T_A}$ on subsystem $A$ is defined as $\langle j,k|W^{T_{A}}|l,s\rangle=\langle l,k|W|j,s\rangle$. 
The negativity $\mathcal{N}$ and the logarithmic negativity $E_{\mathcal{N}}$~\cite{Vidal2002} of a (pure or mixed)  state  $\rho\in\mathcal{H}_{AB}$ are defined as
  \begin{eqnarray}
    \label{EM}
     \mathcal{N}(\rho):=\frac{\|{\rho}^{T_{A}}\|_{1}-1}{2},\  E_{\mathcal{N}}(\rho):=\log_2{\|{\rho}^{T_{A}}\|_{1}},
  \end{eqnarray}
where $\|W\|_{1}:=\Tr\sqrt{W^{\dag}W}$ is the trace norm of $W$. Given a representation of the state $\rho$ in the Fock basis, the negativity and the logarithmic negativity are both functions of the sum of the eigenvalues of $\sqrt{\left(\rho^{T_{A}}\right)^{\dag}\rho^{T_{A}}}$. Since $\mathcal{N}(\rho)$ is the absolute value of the sum of the negative eigenvalues of $\rho^{T_A}$~\cite{Vidal2002}, both $\mathcal{N}(\rho)$ and $E_{\mathcal{N}}(\rho)$ vanish iff $\rho$ is a PPT 
state, meaning that the partial transpose of $\rho$ is positive semidefinite. 
When $\rho$ is a Gaussian state (Appendix~\ref{sec-AP-CM}), $\mathcal{N}(\rho)$ and $E_{\mathcal{N}}(\rho)$ can be entirely determined by the covariant matrix (Appendix~\ref{sec-AP-CM})~\cite{JB,MTJ,RT,GAF, GF2005,ST2017}.
In particular, if $\rho$ is $(1+M)$-mode or $(N+M)$-mode bisymmetric~\cite{NMmode_bisymmetric_Gaussian2005} (i.e.,~locally invariant under the exchange of any two modes within any one of two 
subsystems), then $E_{\mathcal{N}}(\rho)=0$ is a necessary and sufficient condition for the separability of $\rho$~\cite{1Mmode_Gaussian2001,GAF,NMmode_bisymmetric_Gaussian2005,AI2007}.

One potential drawback of the negativity and the logarithmic negativity measures is that neither of them is bounded, owing to the unbounded nature of the trace norm $\|{\rho}^{T_{A}}\|_1$. For instance, for a TMSVS $|\psi^{r}\rangle$, we have $\mathcal{N}(|\psi^{r}\rangle\langle\psi^{r}|)=(\exp{2r}-1)/2$ and $E_{\mathcal{N}}(|\psi^{r}\rangle\langle\psi^{r}|)=2r$, both of which tend to infinity as $r\rightarrow \infty$.

A bounded entanglement monotone is the \emph{concurrence}~\cite{Hill1997,Concurrence2001,Concurrence_convex_roof2003_2,Concurrence2013}, which for a pure state $|\psi\rangle\langle\psi| \in \mathbb{PS}(\mathcal{H}_{AB})$ is defined by $C(|\psi\rangle\langle\psi|):=\sqrt{2\left(1-\Tr(\rho_{A}^{2})\right)}$ where $\rho_{A}=\Tr_{B}(|\psi\rangle\langle\psi|)$. For a mixed state $\rho\in \mathbb{S}(\mathcal{H}_{AB})$, its concurrence is defined by means of the generalized convex roof construction~\cite{Concurrence_convex_roof2003_2,Concurrence2013}:
\begin{equation}
    C(\rho):=\inf\limits_{\{p_i,|\psi_i\rangle\}}\sum\limits_{i}p_i C\left(|\psi_i\rangle\langle \psi_i|\right),
\end{equation}
where the infimum is taken over all possible pure-state ensembles $\{p_i,|\psi_i\rangle\}$, leading to $\rho=\sum\nolimits_i p_{i}|\varphi_i\rangle\langle\varphi_i|$. Therefore, given a generic mixed state, unlike the negativity and the logarithmic negativity, the concurrence cannot be directly computed from the representation of the state in the Fock basis,  due to the fact that the number of possible ensembles $\{p_i,|\psi_i\rangle\}$ is exponentially large.

The definition of concurrence can be extended to another entanglement monotone~\cite{G_concurrence2005}.
Considering a $(d\times d)$-dimensional bipartite pure state $|\psi\rangle$ with Schmidt decomposition $|\psi\rangle=\sum_{j=0}^{d-1}\sqrt{\lambda_j}|jj\rangle$ where $\lambda_j>0$ and $\sum_j\lambda_j=1$, its entanglement can be quantified by the entanglement monotone called \emph{$G$-concurrence}~\cite{G_concurrence2005}, $C_{\mathrm{G}}(|\psi\rangle\langle\psi|):=d(\lambda_0\lambda_1\dots\lambda_{d-1})^{1/d}$. For a mixed state, the $G$-concurrence can be similarly defined using the generalized convex roof construction~\cite{Concurrence_convex_roof2003_2}. When $d=2$, the $G$-concurrence reduces to the concurrence measure.

\emph{Entanglement monogamy.---}The concept of entanglement monogamy states that 
entanglement has an ``exclusive'' property such that in a multipartite quantum system, the more strongly two parties are entangled, the less they can share entanglement with others. This principle is illustrated by analyzing certain entanglement measures that adhere to specific inequalities.
Initially, an entanglement measure $E$ was considered monogamous if the CKW inequality~\cite{Monogamy2000}
\begin{eqnarray}\label{ineq-CKW}
    E(\rho_{A|B_1B_2\cdots B_{N-1}})\geqslant \sum_{n=1}^{N-1}E(\rho_{AB_n}),
\end{eqnarray}
holds for every $N$-party state $\rho_{AB_1B_2\cdots B_{N-1}}\in\mathbb{S}(\mathcal{H}_{AB_1B_2\cdots B_{N-1}})$, where $\rho_{A|B_1B_2\cdots B_{N-1}}$ denotes the bipartite state under the partition $A$ and $B_1B_2\cdots B_{N-1}$, and $\rho_{AB_{n}}=\Tr_{B_1\cdots B_{n-1}B_{n+1}\cdots B_{N-1}}(\rho_{AB_1B_2\cdots B_{N-1}})$ is the reduced state of $\rho_{AB_1B_2\cdots B_{N-1}}$ in subsystem $AB_n$. 
Subsequent studies reveal that most entanglement measures do not satisfy Eq.~\eqref{ineq-CKW}~\cite{Monogamy2000,Monogamy2006,Monogamy_ou2007,Monogamy_kim2009,Monogamy2014,Monogamy_Bai2014,Monogamy_Choi2015,Monogamy2015,Monogamy_zhu2015}.
Later, Gour and Guo propose an extended definition of monogamy based on tripartite systems~\cite{monogamy2018}: given three parties $A$, $B$ and $C$, a bipartite entanglement measure $E$ is said to be monogamous, if for every tripartite systems in an Hilbert space $\mathcal{H}_{ABC}$, every $\rho_{ABC}\in\mathbb{S}(\mathcal{H}_{ABC})$ that satisfies $E(\rho_{A|BC})=E(\rho_{AB})$, we have $E(\rho_{AC})=0$.
According to Ref.~\cite{monogamy2018}, this definition implies that a continuous entanglement measure $E$ is monogamous in a tripartite system $\mathcal{H}_{ABC}$ with fixed finite dimension iff there exists some $\alpha>0$ such that $E^{\alpha}(\rho_{A|BC})\geqslant  E^{\alpha}(\rho_{AB})+E^{\alpha}(\rho_{AC})$ holds for all $\rho_{ABC}\in\mathbb{S}(\mathcal{H}_{ABC})$.
The natural extension of the definition in $N$-party systems is that if for every $N$-party state $\rho_{AB_1B_2\cdots B_{N-1}}$ with fixed finite dimension that satisfies
\begin{eqnarray}\label{eq-monogamy-generalN}
    E(\rho_{A|B_1B_2\cdots B_{N-1}})= E(\rho_{AB_k}),
\end{eqnarray}
we have $E(\rho_{AB_n})=0$ for all $n\neq k$. This means that if there exists some $0<\alpha<\infty$ such that 
\begin{eqnarray}\label{ineq-monogamy-generalN}
    E^{\alpha}(\rho_{A|B_1B_2\cdots B_{N-1}})
   \geqslant\sum_{n=1}^{N-1}E^{\alpha}(\rho_{AB_n})
\end{eqnarray}
holds for all $N$-party state $\rho_{AB_1B_2\cdots B_{N-1}}$, then $E$ satisfies the \emph{extended monogamy}.

\section{Ratio negativity}\label{sec_XN}

To introduce our novel entanglement measures, we start by presenting a class of negativity-based entanglement measures, which we call the \emph{f-negativity}.

\begin{definition}\label{def1}
($f$-negativity) Given a real function $f$ which satisfies the following two conditions:\\
    (i) $f(0)=0$,\\
    (ii) $f$ is strictly increasing in $[0,+\infty)$,\\
    we define the function-based negativity generated by $f$ for any bipartite state $\rho$ as
\begin{eqnarray}
   f_{\mathcal{N}}(\rho):= f(\mathcal{N}(\rho)).
\end{eqnarray}
We call $f_\mathcal{N}$ the $f$-negativity. 
\end{definition}

For $f(x):=x$, the $f$-negativity reduces to the conventional negativity measure. For $f(x):=\ln{(2x+1)}$, the $f$-negativity reduces to the logarithmic negativity.
By definition, $f_{\mathcal N}(\rho)=0$ iff $\mathcal N(\rho)=0$. Furthermore, the $f$-negativity has the following basic properties: 

1) \emph{Monotonicity.} It follows the same ordering with the negativity in the set of all bipartite states, i.e., 
 \begin{eqnarray}\label{equivalence}
     f_{\mathcal{N}}(\rho_{1})\geqslant f_{\mathcal{N}}(\rho_{2})\rightleftharpoons \mathcal{N}(\rho_{1})\geqslant\mathcal{N}(\rho_{2}).
 \end{eqnarray}

2) \emph{Semipositivity.} The inequality $f_{\mathcal{N}}(\rho)\geqslant0$ holds for any bipartite state $\rho$.

These properties further lead to the following results (see Appendix~\ref{sec-AP-properties} for proofs):

3) \emph{$f_{\mathcal{N}}$ is an entanglement measure.} 

4) \emph{$f_{\mathcal{N}}$ is convex when $f$ is convex.}

5) \emph{$f_{\mathcal{N}}$ is an  entanglement monotone when $f$ is concave.}

\begin{definition}
(Ratio negativity) We define the $f$-negativity denoted by the \emph{ratio negativity} $\chi_{\mathcal{N}}$, given by the concave function $f(x):=x/\left(x+1\right)$ ($x\geqslant0$),
\begin{eqnarray}
    \chi_{\mathcal{N}}{(\rho)}:=\frac{\|\rho^{T_A}\|_1-1}{\|\rho^{T_A}\|_1+1}.
\end{eqnarray}
\end{definition}
\begin{definition} ($\alpha$-ratio negativity)
We denote the $\alpha$-th power ($\alpha>0$) of the ratio negativity as the \emph{$\alpha$-ratio negativity}, $\chi^{\alpha}_{\mathcal{N}}$, which is also an $f$-negativity given by $f(x):=\left[x/\left(x+1\right)\right]^\alpha$. 
\end{definition}

The ratio negativity $\chi_{\mathcal{N}}$ is an entanglement monotone in the range $[0,1]$ that is nonconvex. More broadly, for any $0< \alpha \leqslant 1$, the $\alpha$-ratio negativity, $\chi^{\alpha}_{\mathcal{N}}$, also constitutes a nonconvex entanglement monotone. 

For pure states, which are frequently employed  in entanglement percolation theories~\cite{Percolation_review2023}, the ratio negativity has a simple form.
Consider a bipartite pure state whose Schmidt decomposition is given by $|\psi\rangle=\mathop{\Sigma}_{\alpha}\sqrt{\lambda_{\alpha}}|s_{\alpha}l_{\alpha}\rangle$ for $\lambda_{\alpha}>0,\ \mathop{\Sigma}_{\alpha}\lambda_{\alpha}=1$. We have
  \begin{eqnarray}\label{eq-pure}
         \chi_{\mathcal N}(|\psi\rangle\langle\psi|)=
                 \frac{(\mathop{\sum}\limits_{\alpha}\sqrt{\lambda_{\alpha}})^2-1}
                      {(\mathop{\sum}\limits_{\alpha}\sqrt{\lambda_{\alpha}})^2+1}.
  \end{eqnarray}

{A commonly used pure state is the two-mode squeezed vacuum state (TMSVS), a pure $(1+1)$-mode Gaussian state characterized by a squeezing parameter $r>0$:
\begin{eqnarray}
    \label{TMSVS}
    |\psi^{r}\rangle=\sqrt{1-\chi^{2}}\sum_{n=0}^{+\infty}\chi^{n}|nn\rangle,\ \chi\equiv \tanh{r}.
  \end{eqnarray}}
By Eq.~\eqref{eq-pure}, its ratio negativity is simply
  \begin{eqnarray}\label{eq-EPR}
    \label{ER-r}
    \chi_{\mathcal N}(|\psi^{r}\rangle\langle\psi^{r}|)=\tanh r=\chi,
  \end{eqnarray}
which is closely related to the explicit representation of the TMSVS.
This simple representation plays an important role in characterizing 1D QNs, as we will demonstrate later in this work.

The ratio negativity satisfies the following inequalities: 

1) \emph{Subadditivity.---}Given Hilbert spaces $\mathcal H_{A}=\mathcal H_{A_{1}}\otimes\mathcal H_{A_{2}}$, $\mathcal H_{B}=\mathcal H_{B_{1}}\otimes\mathcal H_{B_{2}}$, $\mathcal H_{A_{k}B_{k}}=\mathcal H_{A_{k}}\otimes\mathcal H_{B_{k}}$, and a state $\rho_{k}$ of $\mathcal{H}_{A_kB_k}$ ($k=1,2$).
Then the state $\rho=\rho_1\otimes\rho_2$ of bipartite system $AB$ satisfies the subadditivity
  \begin{eqnarray}
   \label{tensor-2}
   \begin{aligned}
      \chi_{\mathcal N}(\rho)=\frac{\chi_{\mathcal N}(\rho_{1})+\chi_{\mathcal N}(\rho_{2})}{1+\chi_{\mathcal N}(\rho_{1})\chi_{\mathcal N}(\rho_{2})}\\
      \leqslant \chi_{\mathcal N}(\rho_{1})+\chi_{\mathcal N}(\rho_{2}),\ \
   \end{aligned}
  \end{eqnarray}
where the equality is true only when $\rho_{1}$ or $\rho_{2}$ is a PPT state (Appendix~\ref{sec-AP-tensor}).

2) \emph{Monogamy.---}We investigate the monogamy of the $\alpha$-ratio negativity.
 We find that while the ratio negativity ($\alpha=1$) violates the CKW inequality (Appendix~\ref{sec-AP-CKW}), for larger $\alpha$ the inequality could be satisfied.
To this end, we introduce the convex-roof extended negativity $\widetilde{\mathcal{N}}$~\cite{Negativity_convex_roof2003}:
\begin{eqnarray}\label{eq-CREN}
    \widetilde{\mathcal{N}}(\rho):=\inf\limits_{\{p_{i},|\varphi_i\rangle\}}\sum_{i}p_i\mathcal{N}(|\varphi_i\rangle\langle\varphi_i|)
\end{eqnarray}
for a bipartite state $\rho=\sum\nolimits_i p_{i}|\varphi_i\rangle\langle\varphi_i|$,
where the minimum is taken over all possible ensembles of pure states that construct $\rho$.
This construction allows us to generalize the $\alpha$-ratio negativity $\chi_{\mathcal{N}}^{\alpha}$ to a new entanglement measure:
\begin{definition}
    (The convex-roof extended $\alpha$-ratio negativity) For every bipartite state $\rho_{AB}$ of $\mathcal{H}_{AB}$, we define the ratio convex-roof extended $\alpha$-ratio negativity $\chi_{\widetilde{\mathcal{N}}}^{\alpha}$ for $\rho_{AB}$ as
\begin{eqnarray}
    \chi_{\widetilde{\mathcal{N}}}^{\alpha}(\rho_{AB})= f\left(\widetilde{\mathcal{N}}(\rho_{AB})\right)
\end{eqnarray}
with $f(x):=\left[x/(x+1)\right]^{\alpha}$.
\end{definition}
Note that the inequality $\widetilde{\mathcal{N}}(\rho)\geqslant\mathcal{N}(\rho)$ always holds, with the equality holding if $\rho$ is a pure state~\cite{Negativity_convex_roof2003}. 
Due to the monotonicity of $f$, this relationship extends directly to $\chi_{\widetilde{\mathcal{N}}}^{\alpha}(\rho) \geqslant\chi_{\mathcal{N}}^{\alpha}(\rho)$. Thus, this allows us to first consider the monogamy of $\chi_{\widetilde{\mathcal{N}}}^{\alpha}$ for {all} states of certain systems; then, we reduce the result to pure states, showing the monogamy of $\chi_{\mathcal{N}}^{\alpha}$ for the same systems.

We focus on the following $N$-party systems, denoted as $\mathcal{H}_{AB_1B_2\cdots B_{N-1}}$, consisting of parties $A,B_1,\cdots,B_{N-1}$ and satisfying one of these conditions:

1) The dimension of $\mathcal{H}_{AB_1B_2\cdots B_{N-1}}$ is $\overbrace{2\times 2\times\dots\times 2}^N$, i.e.,~each party is a qubit system.

2) $N=3$ and the dimension is $2\times 2\times 3$ or $2\times 2\times 2^m$ for arbitrary positive integer $m$.

We find that when $\alpha\geqslant \log_{3\left(\sqrt{2}-1\right)}2\approx 3.191$, $\chi_{\widetilde{\mathcal{N}}}^{\alpha}$ satisfies the CKW inequality:
\begin{eqnarray}\label{ineq-XNa1}
   \chi_{\widetilde{\mathcal{N}}}^{\alpha}(\rho_{A|B_1 B_2 \cdots B_{N-1}})\
    \geqslant\sum_{n}\chi_{\widetilde{\mathcal{N}}}^{\alpha}(\rho_{AB_n})
\end{eqnarray}
for every state $\rho_{AB_1B_2\cdots B_{N-1}}$ of $\mathcal{H}_{AB_1B_2\cdots B_{N-1}}$ (Appendix~\ref{sec-AP-extended-ratio-monogamy}). Here, the vertical indicates the bipartite split across which
the (bipartite) entanglement is measured and $\rho_{AB_n}$ is the reduced state of $\rho_{AB_1B_2\cdots B_{N-1}}$ in subsystem $AB_n$. Finally, reducing to pure states, we further prove that the same monogamy manifests in the $\alpha$-ratio negativity (not convex-roof extended) for pure states of the same systems (Appendix~\ref{sec-AP-ratio-monogamy}).

Notably, the monogamy definition itself can be  generalized, encompassing ``extended monogamy~\cite{monogamy2018},'' where an entanglement measure is considered extended-monogamous so long as some power of it satisfies the CKW inequality [Eq.~\eqref{ineq-monogamy-generalN}]. In this light, the ratio negativity exhibits extended monogamy, since its powered variant, the $\alpha$-ratio negativity, already demonstrates monogamy.

\yaqi{\section{Characteristic length of entanglement transmission}}

We start by defining the \yaqi{\emph{the characteristic length of an entanglement transmission scheme}} in a 1D QN as follows:
\yaqi{\begin{definition}(\yaqi{the characteristic length})
    For a given swapping-based entanglement transmission scheme in a 1D chain of $l$ identical states (see Fig.~\ref{fig-seriesN}), we define the characteristic length $\xi$  as the ratio of $l$ to the final entanglement between $S$ and $T$ through the scheme,
    \begin{eqnarray}
    \label{eq_characteristic_measure}      
    \xi:= -\lim\limits_{l\rightarrow\infty}\frac{l}{\ln{E(\rho_{ST})}},
    \end{eqnarray}
    where $E(\rho_{ST})$ denotes the final entanglement established between $S$ and $T$, quantified by an entanglement measure $E$.
\end{definition}}

\yaqi{If the value $-l/\ln{E(\rho_{ST})}$ in Eq.~\eqref{eq_characteristic_measure} is independent of $l$, then the characteristic length $\xi$ can be directly determined by $E$ independent of $l$ already before taking $l\rightarrow\infty$. 
This can be achieved by a class of entanglement measures that satisfy the multiplicative property. 
In the following, we discuss what entanglement measures belong to this class.
}

\begin{figure}[t!]
  \centering
  \subfigure[]{
   \includegraphics[width=90pt]{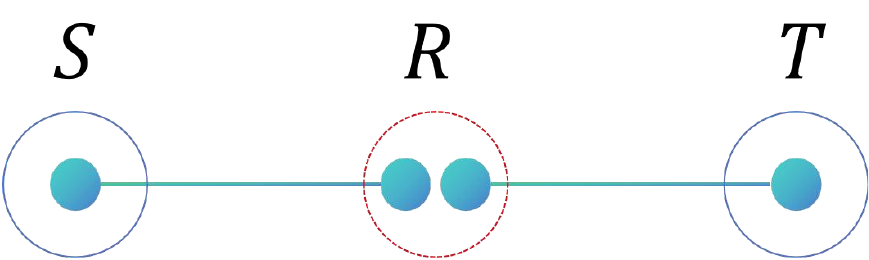}
  }
  \hspace{0pt}
  \subfigure[]{
   \includegraphics[width=90pt]{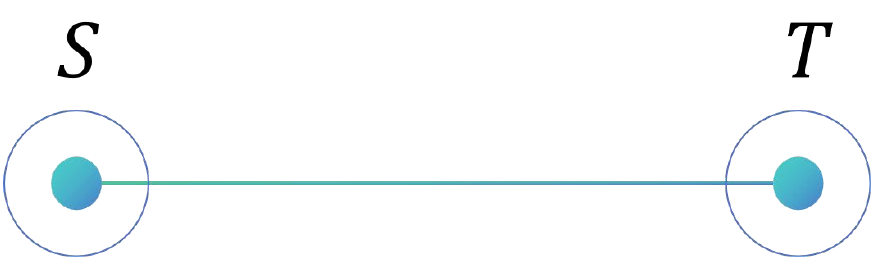}
  }
  \caption{\label{fig-series1} Series rule. (a) The simplest 1D network consisting of three nodes, $S$ (source), $R$ (relay), and $T$ (target). (b) The resulting link between $S$ and $T$ after applying the swapping protocol.}
\end{figure}

We begin with an example of simple bond percolation in a classical network~\cite{bond_percolation1980,cross-probab-square_k80}. Consider a simple 1D (series) network shown in Fig.~\ref{fig-series1}. This network has two consecutive links: one between $S$ and $R$ and another between $R$ and $T$, each with connection probabilities $p_{SR}$ and $p_{RT}$, respectively.  The overall probability of connecting $S$ and $T$, denoted as $p_{ST}$, is the product of $p_{SR}$ and $p_{RT}$, i.e., 
\begin{eqnarray}
\label{eq_p}
    p_{ST} = p_{SR}\cdot p_{RT}.
\end{eqnarray}
This multiplicative property is essential for modeling many realistic, distance-dependent scenarios. For example, in the case of a lossy channel, if the capacity follows an exponential decay over distance $L$, such that 
$p_{SR}=\exp\left(-L_{SR}\right)$ and $p_{RT}=\exp\left(-L_{RT}\right)$, then the total rate becomes $p_{ST}=\exp\left[-\left(L_{SR}+L_{RT}\right)\right]=p_{SR}\cdot p_{RT}$.
Alternatively, one can think of Eq.~\eqref{eq_p} as the series rule for calculating the total resistance in a series resistor network~\cite{random-walk-electr-netw}, if we define the resistance of each resistor to be $\propto -\ln p$.

A similar behavior occurs in the entanglement swapping protocols of 1D QNs, where each link represents a generic bipartite entangled state (either CV or DV, pure or mixed) shared between the two nodes connected by the link.
Consider, again, Fig.~\ref{fig-series1}, where the two links (entangled states) are denoted by $\rho_{SR}$ and $\rho_{RT}$, respectively. 
Entanglement swapping ideally generates an entangled state $\rho_{ST}$ between nodes $S$ and $T$ using only LOCC, while consuming both $\rho_{SR}$ and $\rho_{RT}$. The entanglement
of the final state $\rho_{ST}$ quantified by the measure $E(\rho_{ST})$ is determined by the entanglement of both initial states, $E(\rho_{SR})$ and $E(\rho_{RT})$, as well as the swapping protocol in use. 

\yaqi{
\emph{Entanglement measures with multiplicative property.---}If $E$ satisfies 
    \begin{eqnarray}\label{eq-multiplication}     E(\rho_{ST})=E(\rho_{SR})\cdot E(\rho_{RT}),
    \end{eqnarray}
then for a 1D chain (see Fig.~\ref{fig-seriesN}) of $l$ identical states with entanglement $E$, we have 
$l/\ln{E(\rho_{ST})}=l/\ln{E^l}=\left(\ln E\right)^{-1}$.
In the following, we explore what swapping protocols enable us to find an entanglement measure with such a multiplicative property.}

\emph{Existence.---}\yaqi{Given an existing entanglement measure $E$ which quantifies the entanglement of the  bipartite states in Fig.~\ref{fig-series1}. If there exists a binary function $g(x,y)$ such that the final entanglement is expressed as $E(\rho_{ST})= g\left(E(\rho_{SR}), E(\rho_{RT})\right)$, does there exist an entanglement measure which can be written as $f_{E}(\rho):=f\left(E(\rho)\right)\in[0,1]$ of $E$, where $f$ is a function satisfying the conditions in Definition~\ref{def1}, such that the equation 
\begin{eqnarray}\label{eq-existence}
    f_{E}\left(\rho_{ST}\right)= f_{E}\left(\rho_{SR}\right)\cdot f_{E}\left(\rho_{RT}\right)
\end{eqnarray}
holds true? Due to the properties of $E$, it can be concluded that the function $g$ must satisfy the following conditions:}
\yaqi{\begin{enumerate} 
    \item $g\left(E(\rho_{SR}), E(\rho_{RT})\right)$ is strictly increasing with respect to both $E(\rho_{SR})$ and $E(\rho_{RT})$, and $g\left(E(\rho_{SR}), E(\rho_{RT})\right)=0$ if and only if $E(\rho_{SR})=0$ or $E(\rho_{RT})=0$;
    \item $g\left(E(\rho_{SR}), E(\rho_{RT})\right)\leqslant\min\{E(\rho_{SR}), E(\rho_{RT})\}$.
\end{enumerate}
Generally, for each of $\rho_{SR}$ and $\rho_{RT}$, all possible values of the entanglement form an interval. 
Based on this, the necessary and sufficient condition for the existence of an entanglement measure $f_{E}$ with the multiplicative property is given as follows: 
\begin{theorem}\label{theorem-multiplication}
    There exists a strictly monotonic function $f$ such that Eq.~\eqref{eq-existence} holds if and only if the function $g$ is a continuous group operation~\cite{Continuous_group1960,Continuous_group_th1966}.
\end{theorem}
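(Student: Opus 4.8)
The plan is to reduce the desired multiplicative representation to the classical question of when a binary operation on an interval is \emph{conjugate} to addition of reals, and then to invoke the structure theory of one-dimensional continuous (one-parameter) groups cited in Refs.~\cite{Continuous_group1960,Continuous_group_th1966}. Writing $x=E(\rho_{SR})$, $y=E(\rho_{RT})$ and letting $I\subseteq[0,\infty)$ denote the range of $E$ on the relevant family of states, the sought identity $f(g(x,y))=f(x)f(y)$ with $f$ strictly monotone and $f(0)=0$ becomes, after taking logarithms, $\phi(g(x,y))=\phi(x)+\phi(y)$ for $\phi:=\ln\circ f$; that is, $g$ must be carried onto the addition operation by a continuous, strictly monotone bijection $\phi$ of $I$ onto a real interval (a half-line, once the normalization $f(0)=0$ and the contraction bound of condition~2 are imposed). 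So Theorem~\ref{theorem-multiplication} is the assertion that such a conjugating $\phi$ exists if and only if $g$ is the operation of a continuous group in the sense of the cited classification.

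For the ``only if'' direction I would transport structure along $f$. If $f$ with the stated properties exists, then $g=f^{-1}\bigl(f(\cdot)f(\cdot)\bigr)$ inherits from ordinary multiplication on $f(I)\subseteq[0,1]$: continuity (since $f$ and $f^{-1}$ are continuous, $f$ being strictly monotone), associativity and commutativity, cancellativity (from injectivity of $f$), and---after adjoining the two ends of $I$, which correspond to the identity $1$ and the absorbing element $0$ of $([0,1],\cdot)$---an identity and inverses; hence $g$ is a continuous group operation. Commutativity of $g$ is in fact forced a priori, since $f(g(x,y))=f(x)f(y)=f(y)f(x)=f(g(y,x))$ and $f$ is injective, while condition~1 (monotonicity) and condition~2 (contraction) follow immediately once one observes $\phi\le 0$ on $I$.

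The substantive direction is ``if.'' Here I would invoke the classical structure theorem: a connected, locally compact, one-dimensional topological group is isomorphic either to $(\mathbb{R},+)$ or to the circle group. The circle is excluded because $I$ is an interval---hence not compact without boundary---and, more in the spirit of the present setting, because condition~2 forces $g$ to be strictly contracting, which is incompatible with the periodic orbits of a compact group. One thereby obtains a continuous order isomorphism $\phi$ of $(I,g)$ onto $(\mathbb{R},+)$ (onto a sub-semigroup of it once the boundary value $0$ is accounted for), and sets $f:=\exp(c\,\phi)$ with $c$ chosen so that $f$ is strictly increasing and $f(I)\subseteq[0,1]$: condition~1 ($g(x,y)=0$ iff $x=0$ or $y=0$) places the separable value $0$ at the absorbing end, giving $f(0)=0$, and condition~2 places the identity end at $1$. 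It then remains to extend $f$ monotonically to all of $[0,+\infty)$, to check the two hypotheses of Definition~\ref{def1}, and to note $f(g(x,y))=\exp\bigl(c\phi(x)+c\phi(y)\bigr)=f(x)f(y)$, all of which are routine.

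I expect the main obstacle to be this ``if'' direction, and within it two points: first, correctly matching ``continuous group operation'' to the hypotheses (continuity, associativity, commutativity, cancellativity, identity, inverses) under which the one-dimensional classification applies---in particular, justifying that associativity of $g$ is legitimately available, which it is, since it is already implicit in requiring the multiplicative rule to be consistent on a chain of length $l$ where $E(\rho_{ST})=E(\rho)^{l}$ must not depend on the order in which swappings are performed; and second, the boundary bookkeeping---using conditions~1 and~2 to pin the separable value and the supremal entanglement value to the two ``ends at infinity'' of the additive line, so that the exponential normalization yields an $f$ obeying $f(0)=0$ with range in $[0,1]$, rather than merely a conjugacy to addition on an abstract interval. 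The remaining ingredients---the logarithmic reduction, the structure transport in the ``only if'' direction, and the check against Definition~\ref{def1}---should be straightforward.
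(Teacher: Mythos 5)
Your proposal is correct in substance, but it arrives at the result by a different route than the paper: where you reconstruct the argument from scratch, the paper's entire proof consists of instantiating $h(z,w):=zw$ in Acz\'el's general theorem on the equation $f[g(x,y)]=h[f(x),f(y)]$ (Theorem~\ref{theorem-continuous_group}, quoted from Sec.~2.2.4 of Ref.~\cite{Continuous_group_th1966}), which states precisely that a continuous, strictly monotonic $f$ exists if and only if $g$ is itself a continuous group operation, given that $h$ is one. Your logarithmic reduction $\phi=\ln\circ f$, the transport of associativity/cancellativity/identity along $f$ in the ``only if'' direction, and the conjugation of $(I,g)$ onto $(\mathbb{R},+)$ in the ``if'' direction are, in effect, a proof sketch of that cited theorem; the one place your machinery is heavier than necessary is the appeal to the classification of connected locally compact one-dimensional topological groups, where Acz\'el's own proof of the associativity equation constructs the conjugating map elementarily by iterating the operation. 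What your version buys is self-containedness and, more importantly, an explicit treatment of the boundary bookkeeping that the citation glosses over: you correctly observe that condition~1 makes the separable value $0$ an absorbing element (hence outside any group) and that condition~2, $g(x,y)\leqslant\min\{x,y\}$, forces the identity to sit at the top of the interval of entanglement values --- which in fact means the structure can only be conjugate to addition on a half-line $(-\infty,0]$, i.e.\ a cancellative sub-semigroup of $(\mathbb{R},+)$, rather than a group in the strict sense. This is exactly the reading under which ``continuous group operation on an interval'' must be understood for the theorem to be non-vacuous, and your proposal makes that visible where the paper's two-line citation does not.
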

We prove this result in Appendix~\ref{sec-AP-multiplication}.}
In particular, if $g\left(E(\rho_{SR}), E(\rho_{RT})\right)=E(\rho_{SR})\cdot E(\rho_{RT})$ already satisfies the multiplicative property, then every power-law function $f(E(\rho))=E^\alpha(\rho)\in[0,1]$ for $\alpha>0$ also satisfies Eq.~\eqref{eq-existence}, reflecting a form of ``gauge redundancy''~\cite{Pecplation_ConPT2021}.

\emph{DV \yaqi{entanglement measures with multiplicative property}.---}For several DV-based entanglement swapping protocols, the corresponding entanglement measures with the multiplicative property already exist and are widely explored. We present a few examples:

\emph{Example 1.---}Consider a 1D QN where each link represents a pure, partially entangled $2$-qubit state [i.e.,~a $(2\times 2)$-dimensional system] which is not maximally entangled.
The entanglement of each link can be characterized by a probability measure $P_{\mathrm{SCP}}$, which is the \yaqi{maximum} probability to convert the state to a singlet (that is, a Bell state up to unitary transformation) by LOCC, namely the singlet conversion probability (SCP)~\cite{Percolation_CEP2007}.
Motivated by Eq.~\eqref{eq_p}, given the probabilities $P_{\mathrm{SCP}}(\rho_{SR})$ and $P_{\mathrm{SCP}}(\rho_{RT})$ of converting states $\rho_{SR}$ and $\rho_{RT}$ to singlets, the probability that both states are converted into singlets is given by
\begin{eqnarray}
    P_{\mathrm{SCP}}(\rho_{ST})=P_{\mathrm{SCP}}(\rho_{SR})\cdot P_{\mathrm{SCP}}(\rho_{RT}).
\end{eqnarray}
Here, $P_{\mathrm{SCP}}(\rho_{ST})$ can be regarded as a SCP (although not the optimal SCP~\cite{percolation2008}) for obtaining a singlet between $S$ and $T$. Once two singlets connected in series are obtained, a trivial Bell-state measurement~\cite{swapping1993} on $R$ can establish a singlet between $S$ and $T$, while consuming the original two singlets, $S$--$R$ and $R$--$T$.
This type of swapping protocol is actually the building block for the classical entanglement percolation proposed in Ref.~\cite{Percolation_CEP2007}, and we see that the SCP shows \yaqi{the multiplicative property} when considering the corresponding swapping protocol. 

\emph{Example 2.---}Consider a 1D QN where each link represents a pure, partially entangled 2-qubit state.
For the deterministic swapping protocol of Ref.~\cite{percolation2008}, the concurrence $C$ exhibits the multiplicative property:
\begin{eqnarray}
\label{eq_example2}
    C(\rho_{ST})=C(\rho_{SR})\cdot C(\rho_{RT}).
\end{eqnarray}
This shows that from the two partially entangled pure qubit states $\rho_{SR}$ and $\rho_{RT}$, one can deterministically obtain a new, partially entangled pure qubit state $\rho_{ST}$ that has concurrence $C(\rho_{ST})$.
This type of swapping protocol is the building block for the concurrence percolation proposed in Ref.~\cite{Pecplation_ConPT2021}.

\emph{Example 3.---}Consider a 1D QN where each link represents a pure, partially entangled 2-qudit state [i.e.,~a $(d\times d)$-dimensional system].
For the deterministic swapping protocol of Ref.~\cite{G_concurrence2005}, the $G$-concurrence~\cite{G_concurrence2005} exhibits the multiplicative property:
\begin{eqnarray}
\label{eq_example3}
    C_{G}(\rho_{ST})=C_{G}(\rho_{SR})\cdot C_{G}(\rho_{RT}).
\end{eqnarray}
Similarly to the previous example, from the two partially entangled pure qudit states $\rho_{SR}$ and $\rho_{RT}$, one can deterministically obtain a new, partially entangled pure qudit state $\rho_{ST}$ that has $G$-concurrence $C_G(\rho_{ST})$.
Moreover, it has been shown that both deterministic swapping protocols [Eqs.~\eqref{eq_example2}~and~\eqref{eq_example3}] are \emph{optimal}, that is, $C(\rho_{ST})$ [$C_{G}(\rho_{ST})$] is the optimal average concurrence ($G$-concurrence) that can be obtained by any deterministic or probabilistic swapping protocols~\cite{DET2023}.
This implies a fundamental upper bound on entanglement, indicating that it \emph{must} decay exponentially over long distances as the number of QN links increases.

\emph{CV \yaqi{entanglement measures with multiplicative property}.---}CV systems are inherently infinite-dimensional. Therefore, one may consider the $G$-concurrence in the $d\to \infty$ limit. However, as $d\to \infty$, the $G$-concurrence approaches zero, reducing Eq.~\eqref{eq_example3} to a trivial form.
To find \yaqi{a nontrivial entanglement measure} that has the multiplicative property for CV systems, we consider a 1D QN where each link denotes a TMSVS. 
The deterministic swapping protocol for TMSVS, as given by  Ref.~\cite{vanLoock2002}, takes two TMSVSs $\rho_{SR}=|\psi^{r_{SR}}\rangle\langle\psi^{r_{SR}}|$ and $\rho_{RT}=|\psi^{r_{RT}}\rangle\langle\psi^{r_{RT}}|$ as input. The output is also a TMSVS, $\rho_{ST}=|\psi^{r_{ST}}\rangle\langle\psi^{r_{ST}}|$, for which the squeezing parameter is determined by $\tanh r_{ST}=\tanh r_{SR} \tanh r_{RT}$~\cite{vanLoock2002}.
Thus, from Eq.~\eqref{ER-r}, it is straightforward to see that the ratio negativity of the final TMSVS $\rho_{ST}$ satisfies 
\begin{eqnarray}
\label{eq_characteristic_measure_chi}
    \chi_{\mathcal{N}}(\rho_{ST})=\chi_{\mathcal{N}}(\rho_{SR})\cdot \chi_{\mathcal{N}}(\rho_{RT}),
\end{eqnarray}
and hence $\chi_{\mathcal{N}}$ exhibits \yaqi{the multiplicative property} when considering the Gaussian swapping protocol for TMSVSs.

\yaqi{Note that the $\alpha$-ratio negativity also exhibits the multiplicative property for the Gaussian swapping protocol, since} Eq.~\eqref{eq_characteristic_measure_chi} could also be rewritten as
\begin{eqnarray}\label{eq-chi_rate}
    \chi_{\mathcal{N}}^{\alpha}(\rho_{ST})=\chi_{\mathcal{N}}^{\alpha}(\rho_{SR})\cdot \chi_{\mathcal{N}}^{\alpha}(\rho_{RT})
\end{eqnarray}
for $\alpha>0$. \yaqi{This measure yields the characteristic length $\xi\propto -\left(\ln \chi_{\mathcal{N}}\right)^{-1}$.}
Additionally, it has been shown that this swapping protocol is optimal, in the sense that the entanglement (the ratio negativity) of the final TMSVS $\rho_{ST}$ is the maximum that can be obtained by any Gaussian swapping protocols~\cite{vanLoock2002}. 
In summary, the various entanglement swapping protocols, as well as their corresponding \yaqi{entanglement measures and correlation lengths}, are shown in Table~\ref{table-measures}. 

\section{Discussion and Conclusions}\label{sec-conclusion}

In several practical applications aiming at implementing quantum technologies, CV states are used as a valid alternative of single photon sources. Whereas the latter offer higher performance, the former are much more feasible with the state-of-the-art devices. Among the various CV states a fundamental role is played by Gaussian states, such as the TMSVSs. These states can be used to perform entanglement swapping operations on a 1D QN. To characterize the properties of entanglement swapping for such a system, we searched for a suitable entanglement measure.
To this aim, we initially identified a family of entanglement measures termed as the $f$-negativity. Among them, the ratio negativity is of special interest, as it is a bounded and nonconvex entanglement monotone with desired properties, such as the subadditivity and the monogamy. We show that the ratio negativity is naturally related to the explicit expression of TMSVSs and can help us determine the correlation length of the optimal Gaussian entanglement swapping protocol for distributing TMSVSs on 1D QNs.

The ratio negativity introduced in this work is a novel figure of merit that can find several potential applications in a variety of quantum systems. For instance, the \emph{optimal} swapping protocol of generic pure CV states (beyond TMSVSs) is not known in general. It is also unclear whether such a protocol can be characterized by an \yaqi{entanglement measure that satisfies} the form of Eq.~\eqref{eq_characteristic_measure},
or bears a more complicated form (akin to the optimal SCP protocol for a 1D QN~\cite{percolation2008}).
We believe that the ratio negativity might find an important role in addressing these issues and a more in-depth analysis of it might unveil new properties of QNs, in which entanglement swapping plays a fundamental role.

\yaqi{In practical experimental settings, one may encounter 1D TMSVS QNs with untrusted nodes~\cite{Untrust_scenario2020,Untrust_scenario2022}. 
Even in specific untrusted situations, the ratio negativity continues to be an entanglement measure with the multiplicative property. For example, when the entanglement swapping protocol with untrusted nodes affects only local displacements, rather than the correlation matrices of the bipartite Gaussian states distributed over the elementary links, the information in those matrices remains invariant~\cite{SP2005,CSRNTJS2012}. 
As a result, the ratio negativity retains its multiplicative property for the CV entanglement swapping protocol discussed above.}

\section*{Acknowledgements}

\yaqi{We thank Chao Zu from the Department of Mathematical Sciences, Dalian University of Technology for  help and useful discussions.} K.H. and J.H. were supported by the National Natural Science Foundation of China under Grants No.~12271394 and No.~12071336.
N.L.P. was supported by the KAKENHI project 24K07485.

\newpage
\clearpage

\section*{APPENDIX}
\appendix

\renewcommand{\appendixname}{Appendix}
\renewcommand{\thesection}{\Roman{section}}
\renewcommand{\theequation}{\Roman{section}.\arabic{equation}}
\renewcommand{\thesubsection}{\Alph{subsection}}

\section{Covariant matrix and Gaussian states}\label{sec-AP-CM}

Denote by $\hat{a}_{k}$ the \emph{annihilation operator}, defined by $\hat{a}_{k}|n\rangle=\sqrt{n}|n-1\rangle$ for $n>0$ and $\hat{a}_{k}|0\rangle=0$ acting on the Hilbert space ${\mathcal H}_{k}$; $\hat{a}_{k}^{\dag}$ is the \emph{creation operator} that is the conjugate of $\hat{a}_{k}$. The \emph{position and momentum operators} are defined as $\hat{q}_{k}={\frac{1}{\sqrt{2}}}(\hat{a}_{k}+\hat{a}_{k}^{\dag})$,   $\hat{p}_{k}=\frac{1}{\sqrt{2}i}(\hat{a}_{k}-\hat{a}_{k}^{\dag})$, respectively.
Let $\hat{R}= \left(\hat{R}_1,\hat{R}_2,\dots,\hat{R}_{2N}\right)^{T}=(\hat{q}_{1}, \hat{p}_{1}, \ldots, \hat{q}_{N}, \hat{p}_{N})^{T}$, and let $\Lambda=\oplus_{k=1}^{N}J$ with $J=
  \begin{pmatrix}
       0 & 1\\
       -1 & 0
  \end{pmatrix}$.
  
The \emph{covariant matrix} of a state $\sigma\in \mathbb{S}(\mathcal H)$ is a $2N\times2N$ real symmetric matrix $\Gamma=\Gamma_{sl}$ defined by
  \begin{eqnarray}
    \label{CMsl}
    \Gamma_{sl}:=2(\langle{\hat{R}}_{s}{\hat{R}}_{l}\rangle-\langle{\hat{R}}_{s}\rangle\langle{\hat{R}}_{l}\rangle)-i\Lambda_{sl}
  \end{eqnarray}
for $s,l=1,2,\dots,2N$,
where $\langle \hat{A}\rangle:=\Tr(\sigma \hat{A})$ is the expectation of operator $\hat{A}$ on the density matrix $\sigma$.
The \emph{characteristic function} of $\sigma$ is defined as $\chi(\xi):=\langle\mathcal W(\xi)\rangle$ for $\xi\in\mathbb{R}^{2N}$, where ${\mathcal W}(\xi):=\exp{(i\xi^{T}\hat{R})}$ is referred to as the \emph{Weyl operator}.

A \emph{Gaussian state} $\rho$ is the state whose characteristic function is Gaussian:
\begin{eqnarray}
    \chi(\xi)=\exp{\left(-\frac{1}{4}\xi^{T}\Gamma\xi+id^{T}\xi\right)}.
\end{eqnarray}
Here, $d\in\mathbb{R}^{2N}$ is referred to as the \emph{displacement vector}, defined by $d_{j}:=\langle\hat{R}_{j}\rangle$.

\section{Properties of the $f$-negativity}\label{sec-AP-properties}

\begin{proposition}
  $f_{\mathcal{N}}$ is an entanglement measure.
\end{proposition}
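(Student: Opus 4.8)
The plan is to deduce both defining properties of an entanglement measure for $f_{\mathcal{N}}$ directly from the known properties of the negativity $\mathcal{N}$, using only the two elementary features of $f$ recorded in Definition~\ref{def1}: that $f(0)=0$ and that $f$ is strictly increasing on $[0,+\infty)$. Since $\mathcal{N}(\rho)\in[0,+\infty)$ for every bipartite state $\rho$, the composition $f_{\mathcal{N}}(\rho)=f(\mathcal{N}(\rho))$ is well defined; moreover strict monotonicity together with $f(0)=0$ forces $f(x)\geqslant 0$ on the whole domain, so $f_{\mathcal{N}}$ indeed maps $\mathbb{S}(\mathcal{H}_{AB})$ into $\mathbb{R}_+$ (this is the semipositivity already noted).

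For condition (i), recall that a separable state is PPT, hence $\mathcal{N}(\rho)=0$, and therefore $f_{\mathcal{N}}(\rho)=f(0)=0$. For condition (ii), let $\Phi$ be any LOCC map with output ensemble $\{p_i,\rho_i'\}_i$. Since $\mathcal{N}$ is itself an entanglement measure~\cite{Vidal2002,Plenio2005}, it does not increase under LOCC in the averaged (outcome-forgetting) sense of Eq.~\eqref{eq_measure}, i.e.\ $\mathcal{N}(\rho)\geqslant\mathcal{N}\!\left(\sum_i p_i\rho_i'\right)$. Applying the strictly increasing function $f$ to both sides, which preserves the inequality by the order equivalence in Eq.~\eqref{equivalence}, yields $f_{\mathcal{N}}(\rho)=f(\mathcal{N}(\rho))\geqslant f\!\left(\mathcal{N}\!\left(\sum_i p_i\rho_i'\right)\right)=f_{\mathcal{N}}\!\left(\sum_i p_i\rho_i'\right)$, which is precisely Eq.~\eqref{eq_measure} for $f_{\mathcal{N}}$. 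Hence $f_{\mathcal{N}}$ is an entanglement measure.

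There is essentially no technical obstacle here: the argument is a one-line reduction to the fact that $\mathcal{N}$ is an entanglement measure, combined with the order-preserving property of $f$. The only points that require (minor) care are verifying that $f_{\mathcal{N}}$ takes values in $\mathbb{R}_+$ (handled above) and being careful to invoke the averaged LOCC monotonicity of $\mathcal{N}$ [Eq.~\eqref{eq_measure}] rather than the stronger monotone condition [Eq.~\eqref{eq_monotone}]; the latter is genuinely stronger and is treated separately as property~5), where concavity of $f$ becomes essential.
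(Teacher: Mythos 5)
Your proof is correct and follows essentially the same route as the paper's: apply the strictly increasing $f$ to the averaged LOCC-monotonicity inequality for $\mathcal{N}$ [Eq.~\eqref{eq_measure}], using the order equivalence of Eq.~\eqref{equivalence}. You are in fact slightly more explicit than the paper in verifying condition (i) (separable $\Rightarrow$ PPT $\Rightarrow$ $\mathcal{N}=0$ $\Rightarrow$ $f_{\mathcal{N}}=f(0)=0$), which the paper's proof subsumes under ``semipositivity''; this is a welcome clarification, not a divergence.
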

\begin{proof}
For an arbitrary LOCC map $\Phi$, by Eq.~\eqref{equivalence}, one has $f_{\mathcal{N}}(\rho)\geqslant f_{\mathcal{N}}(\Phi(\rho))$, given that $\mathcal{N}$ is an entanglement measure itself.
Therefore, the $f$-negativity does not increase under any LOCC map $\Phi$.
This result combining with the semipositivity assures that the $f$-negativity is an entanglement measure.  
\end{proof}

\begin{proposition}
  $f_{\mathcal{N}}$ is convex when the function $f$ is convex.
\end{proposition}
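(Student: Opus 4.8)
The plan is to obtain the convexity of $f_{\mathcal{N}}$ as an immediate consequence of two facts already available in the excerpt: (a) the negativity $\mathcal{N}$ is itself a convex function on $\mathbb{S}(\mathcal{H}_{AB})$, as recalled in Sec.~\ref{sec-pre}; and (b) by hypothesis $f$ is convex, while by Definition~\ref{def1} it is also strictly increasing on $[0,+\infty)$. The key observation is the classical composition rule: the composition of a nondecreasing convex function with a convex function is convex, so $f_{\mathcal{N}}=f\circ\mathcal{N}$ is convex.

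Concretely, I would fix an arbitrary ensemble $\{q_j,\rho_j\}$ with $q_j\geqslant 0$ and $\sum_j q_j=1$, and chain three inequalities. First, convexity of $\mathcal{N}$ gives $\mathcal{N}\bigl(\sum_j q_j\rho_j\bigr)\leqslant \sum_j q_j\,\mathcal{N}(\rho_j)$. Second, since $f$ is increasing, applying it to both sides preserves the inequality: $f\bigl(\mathcal{N}(\sum_j q_j\rho_j)\bigr)\leqslant f\bigl(\sum_j q_j\,\mathcal{N}(\rho_j)\bigr)$. Third, convexity of $f$, applied to the points $\mathcal{N}(\rho_j)\in[0,+\infty)$ with weights $q_j$, yields $f\bigl(\sum_j q_j\,\mathcal{N}(\rho_j)\bigr)\leqslant \sum_j q_j\, f\bigl(\mathcal{N}(\rho_j)\bigr)$. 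Combining the three lines and recalling $f_{\mathcal{N}}=f\circ\mathcal{N}$ gives exactly $f_{\mathcal{N}}\bigl(\sum_j q_j\rho_j\bigr)\leqslant \sum_j q_j\, f_{\mathcal{N}}(\rho_j)$, which is Eq.~\eqref{eq_convex}.

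There is essentially no hard step here; the only point that needs care — and the place where one might slip — is that the \emph{outer} function in the composition must be monotone, not merely convex, which is why the second inequality above relies on $f$ being increasing, a property guaranteed by Definition~\ref{def1}. One may also note in passing that for Gaussian states $\mathcal{N}$, and hence $f_{\mathcal{N}}$, is finite, so no $+\infty$ ambiguities arise; in the general case the same argument goes through in $[0,+\infty]$ with the usual conventions for extended-real convexity.
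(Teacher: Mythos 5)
Your proof is correct and follows essentially the same route as the paper's: convexity of $\mathcal{N}$, then monotonicity of $f$ to preserve the inequality, then convexity of $f$ applied to the scalars $\mathcal{N}(\rho_j)$. Your explicit remark that the outer function must be monotone (not just convex) for the composition rule to apply is exactly the point the paper relies on via Eq.~\eqref{equivalence}.
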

\begin{proof}
Let the function $f$ be convex.
Since the negativity $\mathcal{N}$ is a convex function, by Eq.~\eqref{equivalence}, we have
    \begin{eqnarray*}
        &&f_{\mathcal{N}}\left(\sum_j q_j\rho_j\right)\nonumber\\
        &=&f\left(\mathcal{N}\left(\sum_j q_j\rho_j\right)\right)
        \leqslant f\left(\sum_j q_j\mathcal{N}(\rho_j)\right).
    \end{eqnarray*}
    If $f$ is convex, then 
    \begin{eqnarray*}
        f\left(\sum_j q_j\mathcal{N}(\rho_j)\right)
        \leqslant \sum_j q_j f\left(\mathcal{N}(\rho_j)\right)
        =\sum_j q_j f_{\mathcal{N}}(\rho_j),
    \end{eqnarray*}
    implying that $f_{\mathcal{N}}\left(\sum_j q_j\rho_j\right)\leqslant\sum_j q_j f_{\mathcal{N}}(\rho_j)$, implying that the $f$-negativity is convex.
\end{proof}

\emph{Remark.---}Conversely, for a nonconvex function $f$, the convexity of the $f$-negativity is not guaranteed. We give a counterexample.
Suppose that $f(x):=\sqrt{x}$,  we
give three states as
\begin{equation}
  \begin{aligned}
    \rho_1&=\frac{1}{2}(|00\rangle\langle00|+|11\rangle\langle11|),\\ \rho_2&=|\phi^{+}\rangle\langle\phi^{+}|, \\
    \rho&=\frac{1}{2}(\rho_1+\rho_2)\\
  \end{aligned}
\end{equation}
where 
\begin{eqnarray}\label{eq-singlet}
    |\phi^{+}\rangle=\frac{|00\rangle+|11\rangle}{\sqrt{2}}
\end{eqnarray}
is a Bell state. 
It follows that  $\mathcal{N}(\rho)=1/4$, $\mathcal{N}(\rho_1)=0$ and $\mathcal{N}(\rho_2)=1/2$, implying that $f_{\mathcal N}(\rho)=1/2$ and 
$\left[f_{\mathcal{N}}(\rho_1)+f_{\mathcal{N}}(\rho_2)\right]/2=\left[f(\mathcal{N}(\rho_1)+f(\mathcal{N}(\rho_2))\right]/2=
\left[f(0)+f\left(\frac{1}{2}\right)\right]/2=f\left(1/2\right)/2=\sqrt{2}/{4}<1/2=f_{\mathcal N}(\rho)$. It follows that  the $f$-negativity $f_{\mathcal{N}}$ is nonconvex. 

Indeed, similar to the aforementioned example, one can calculate and show that the $f$-negativity is nonconvex for the following functions,  $f(x):= x^{\xi}$ for $0<\xi<1$, $f(x):= [\ln{(2x+1)}]^{\eta}$ for $0<\eta<\log_{2}(\log_{3/2}2)\approx 1.293$ and $f(x):=\left[x/\left({x+1}\right)\right]^{\alpha}$ for $\alpha<\log_{5/3}2\approx 1.357$.

\begin{proposition}
  $f_{\mathcal{N}}$ is an entanglement monotone if $f$ is concave (including linear) in $[0,+\infty)$.
\end{proposition}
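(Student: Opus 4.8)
The plan is to reduce the claim to the already-established semipositivity of $f_{\mathcal N}$ (property 2) together with the monotonicity relation Eq.~\eqref{equivalence} and the monotone property of the bare negativity $\mathcal N$. The starting point is that $\mathcal N$ is itself an entanglement monotone, so for any LOCC output ensemble $\{q_j,\sigma_j'\}_j$ produced from $\rho$ we have $\mathcal N(\rho)\geqslant\sum_j q_j\mathcal N(\sigma_j')$. Applying the (monotone, nonnegative) function $f$ to both sides and using that $f$ is increasing gives $f(\mathcal N(\rho))\geqslant f\!\left(\sum_j q_j\mathcal N(\sigma_j')\right)$, so it remains to pass the sum outside the argument of $f$, i.e.\ to show $f\!\left(\sum_j q_j\mathcal N(\sigma_j')\right)\geqslant\sum_j q_j f(\mathcal N(\sigma_j'))$. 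This is precisely the statement that $f$ is superadditive along convex combinations, which is exactly concavity combined with $f(0)=0$: by concavity $f(q_j x_j + (1-q_j)\cdot 0)\geqslant q_j f(x_j)+(1-q_j)f(0)=q_j f(x_j)$ for each $j$, and—because $\sum_j q_j=1$—a second application of concavity (or Jensen's inequality in the reverse direction) yields $f\!\left(\sum_j q_j x_j\right)\geqslant\sum_j q_j f(x_j)$. Combining these two chains gives $f_{\mathcal N}(\rho)=f(\mathcal N(\rho))\geqslant\sum_j q_j f(\mathcal N(\sigma_j'))=\sum_j q_j f_{\mathcal N}(\sigma_j')$, which is Eq.~\eqref{eq_monotone}.

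The steps in order: (1) invoke that $\mathcal N$ is an entanglement monotone to get $\mathcal N(\rho)\geqslant\sum_j q_j\mathcal N(\sigma_j')$; (2) apply $f$ and use condition (ii) of Definition~\ref{def1} (strict monotonicity) to preserve the inequality, giving $f(\mathcal N(\rho))\geqslant f(\sum_j q_j\mathcal N(\sigma_j'))$; (3) use concavity of $f$ together with $f(0)=0$ (condition (i)) to derive the superadditivity $f(\sum_j q_j x_j)\geqslant\sum_j q_j f(x_j)$; (4) chain (2) and (3) and re-identify $f(\mathcal N(\cdot))=f_{\mathcal N}(\cdot)$; (5) note that $f_{\mathcal N}$ is already known to be an entanglement measure (Proposition above), so the vanishing-on-separable-states axiom and LOCC monotonicity are in hand, and the ensemble inequality just proved upgrades it to a monotone. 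The linear case is the degenerate instance where concavity holds with equality, and then $f(x)=cx$ trivially satisfies the superadditivity as an identity; it is worth a one-line remark that this recovers the convexity-and-concavity overlap and that the negativity itself ($f(x)=x$) is consistent.

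The only genuinely delicate point—and the one I would write out carefully—is step (3): one must be slightly careful that the sum $\sum_j q_j x_j$ and the individual $x_j=\mathcal N(\sigma_j')$ all lie in $[0,+\infty)$, where $f$ is assumed concave, so that Jensen applies; this is automatic since negativity is nonnegative. A secondary technical wrinkle is that the ensemble may be (countably) infinite for CV systems, so the interchange of $f$ with an infinite sum and the convergence of $\sum_j q_j f(x_j)$ should be acknowledged; monotonicity and $f(0)=0$ keep everything bounded by $f(\mathcal N(\rho))<\infty$, so dominated convergence / monotone limits of the finite-$J$ inequality settle this without difficulty. I do not expect any real obstacle here—the proof is essentially a two-line application of "monotone $\circ$ monotone stays monotone" plus "concave through the origin is superlinear on convex combinations"—so the writeup should be short.
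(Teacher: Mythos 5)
Your proposal is correct and follows essentially the same route as the paper's proof: invoke that $\mathcal N$ is an entanglement monotone, apply the increasing function $f$, and then use Jensen's inequality for concave $f$ to pull the convex combination outside. One cosmetic remark: since the ensemble weights satisfy $\sum_j q_j=1$, Jensen's inequality for concave $f$ gives $f\left(\sum_j q_j x_j\right)\geqslant\sum_j q_j f(x_j)$ directly, so the detour through $f(0)=0$ and superadditivity in your step (3) is not needed here.
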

\begin{proof}
Consider that the function $f$ is concave.
Let $\{p_{i},\rho_{i}'\}$  be the ensemble of the final state obtained after state $\rho$ undergoes LOCC. Since the negativity $\mathcal{N}(\rho)$ is an entanglement monotone~\cite{Vidal2002}, we have
$\mathcal{N}(\rho)\geqslant\sum_{i}p_{i}
   \mathcal{N}(\rho_{i}')$ which means $f_{\mathcal{N}}(\rho)
   =f\left(\mathcal{N}(\rho)\right)
   \geqslant f\left(\sum_{i}p_{i}
   \mathcal{N}(\rho_{i}')\right)
   \geqslant \sum_{i}p_{i}
   f\left(\mathcal{N}(\rho_{i}')\right) 
   =\sum_{i}p_{i} f_{\mathcal{N}}(\rho_{i}')$ 
holds due to its monotonicity and concavity. Thus, the $f$-negativity is an entanglement monotone. 
\end{proof}

\emph{Remark.---}Conversely, not every $f$-negativity is an entanglement monotone. We give a counterexample.
Let $f(x):=x^4$ which is convex.
For a bipartite pure state $|\psi\rangle_{AB}=\sqrt{0.1}|00\rangle+\sqrt{0.9}|11\rangle$, we perform local POVM $\mathcal{M}=\{M_1,M_2\}$ on party $A$ where $M_1=\sqrt{0.8}|0\rangle_A\langle0|+\sqrt{0.2}|1\rangle_A\langle1|$ and 
$M_2=\sqrt{0.2}|0\rangle_A\langle0|+\sqrt{0.8}|1\rangle_A\langle1|$.
Then we obtain that 
\begin{eqnarray}
    f_{\mathcal{N}}=0.0081&<&p_1 f_{\mathcal{N},1}+p_2 f_{\mathcal{N},2}\nonumber\\
    &=&\frac{13}{50}\times \left(\frac{6}{13}\right)^4+\frac{37}{50}\times \left(\frac{6}{37}\right)^4\nonumber\\
    &\approx&0.0123
\end{eqnarray}
where $f_{\mathcal{N}}$, $f_{\mathcal{N},1}$ and $f_{\mathcal{N},2}$ denote $f$-negativity for input state $|\psi\rangle_{AB}$ and two output states, and $p_1$ and $p_2$ are probabilities to get two output states, respectively.

\begin{proposition}
  $f_{\mathcal{N}}(\rho)=0$ is a necessary and sufficient condition for the separability of all $(1+M)$-mode Gaussian states and $(N+M)$-mode bisymmetric Gaussian states.
\end{proposition}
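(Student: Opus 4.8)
The plan is to reduce the claim to a chain of equivalences, the nontrivial link of which is already available as a cited result about Gaussian states. The starting observation is that, because $f$ satisfies $f(0)=0$ and is strictly increasing on $[0,+\infty)$, we have $f(x)=0$ if and only if $x=0$ for every $x\geqslant 0$; hence $f_{\mathcal{N}}(\rho)=f(\mathcal{N}(\rho))=0$ if and only if $\mathcal{N}(\rho)=0$. This is exactly the remark recorded just after Definition~\ref{def1}, so no work is needed here beyond stating it.

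Next I would recall the standard characterization of vanishing negativity: since $\mathcal{N}(\rho)$ equals the absolute value of the sum of the negative eigenvalues of $\rho^{T_A}$, one has $\mathcal{N}(\rho)=0$ if and only if $\rho^{T_A}\geqslant 0$, i.e.\ $\rho$ is a PPT state; equivalently $E_{\mathcal{N}}(\rho)=\log_2\|\rho^{T_A}\|_1=0$. Combining with the previous step gives
\begin{eqnarray}
  f_{\mathcal{N}}(\rho)=0 \iff \mathcal{N}(\rho)=0 \iff E_{\mathcal{N}}(\rho)=0 \iff \rho\ \text{is PPT}.
\end{eqnarray}
All of this holds for arbitrary bipartite $\rho$, Gaussian or not.

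The final step is to invoke the structural result cited in the Preliminaries: for a $(1+M)$-mode Gaussian state and, more generally, for an $(N+M)$-mode bisymmetric Gaussian state, the PPT condition is not merely necessary but also sufficient for separability (this is where the restriction to these special classes is essential, since generic Gaussian states admit PPT-entangled examples). Appending this equivalence to the chain above yields $f_{\mathcal{N}}(\rho)=0 \iff \rho\ \text{is separable}$ for precisely these classes, which is the assertion. The proof is therefore a composition of known facts rather than a fresh argument; the only point requiring care is to state correctly that the equivalence ``PPT $\iff$ separable'' is being used exactly in the regime (single-mode-versus-rest, or bisymmetric) where it is valid, and to note that the covariance-matrix description of Appendix~\ref{sec-AP-CM} is what makes these Gaussian classes well defined. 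I do not anticipate a genuine obstacle beyond this bookkeeping.
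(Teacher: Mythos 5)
Your proposal is correct and follows essentially the same route as the paper's proof: reduce $f_{\mathcal{N}}(\rho)=0$ to $\mathcal{N}(\rho)=0$ via the properties of $f$, then invoke the cited equivalence between vanishing negativity (equivalently, PPT) and separability for $(1+M)$-mode and $(N+M)$-mode bisymmetric Gaussian states. The only difference is that you make the intermediate ``$\mathcal{N}(\rho)=0 \iff \rho$ is PPT'' step explicit, which the paper leaves implicit.
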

\begin{proof}
By definition, $f_{\mathcal{N}}(\rho)=0$ iff $\mathcal{N}(\rho)=0$. When a Gaussian state $\rho$ is $(1+M)$-mode or $(N+M)$-mode bisymmetric, the vanishing of the negativity (or logarithmic negativity) $\mathcal{N}(\rho)=0$ is a necessary and sufficient condition for the separability of $\rho$~\cite{1Mmode_Gaussian2001,NMmode_bisymmetric_Gaussian2005}, and thus the proposition naturally follows.
\end{proof}

\section{Proof of subadditivity}\label{sec-AP-tensor}

Here, we first show an examples of the ratio negativity applied to the tensor product of quantum states widely used in QNs.

Given Hilbert spaces $\mathcal H_{A}=\otimes_{k=1}^{K}\mathcal H_{A_{k}}$, $\mathcal H_{B}=\otimes_{k=1}^{K}\mathcal H_{B_{k}}$, $\mathcal H_{A_{k}B_{k}}=\mathcal H_{A_{k}}\otimes\mathcal H_{B_{k}}$, and $\rho_{k}\in \mathbb{S}(\mathcal{H}_{A_{k}B_{k}})$ $(k=1,2,\ldots,K)$, for the state $\rho=\otimes_{k=1}^{K}\rho_{k}$, we have $\rho\in \mathbb{S}(\mathcal{H}_{AB})$ and
  \begin{eqnarray}\label{eq-tensorN}
         \chi_{\mathcal N}(\rho)
          =\frac{\prod\limits_{k=1}^{K}\frac{1+\chi_{\mathcal N}(\rho_{k})}{1-\chi_{\mathcal N}(\rho_{k})}-1}
         {\prod\limits_{k=1}^{K}\frac{1+\chi_{\mathcal N}(\rho_{k})}{1-\chi_{\mathcal N}(\rho_{k})}+1}.
  \end{eqnarray}

For the case of $K=2$, from Eq.~\eqref{eq-tensorN}, we have
  \begin{eqnarray}\label{ineq-2}
   \begin{aligned}
      \chi_{\mathcal N}(\rho)=\frac{\chi_{\mathcal N}(\rho_{1})+\chi_{\mathcal N}(\rho_{2})}{1+\chi_{\mathcal N}(\rho_{1})\chi_{\mathcal N}(\rho_{2})}\\
      \leqslant \chi_{\mathcal N}(\rho_{1})+\chi_{\mathcal N}(\rho_{2})\ \
   \end{aligned}
  \end{eqnarray}
where the equality is true only when $\chi_{\mathcal N}(\rho_{1})=0$ or $\chi_{\mathcal N}(\rho_{2})=0$.
Recall that a state $\rho_{AB}$ of a bipartite system $AB$ is called the positive partial transpose (PPT) state if $\rho^{T_A}$ is positive.
In addition,  $\rho_{AB}$ is a PPT state iff $\mathcal{N}(\rho_{AB})=0$~\cite{Vidal2002}.
From the ordering of the ratio negativity, we then have that the equality in Eq.~\eqref{ineq-2} is ture iff $\rho_{1}$ or $\rho_{2}$ is a PPT state.

\section{Monogamy inequalities}\label{sec-AP-monogamy}

\subsection{$\chi_{\mathcal{N}}$ violates the CKW inequality}\label{sec-AP-CKW}
Consider, for example, $\rho_{ABC}=|\psi\rangle_{ABC}\langle\psi|\in\mathbb{PS}(\mathcal{H}_{ABC})$ with $|\psi\rangle_{ABC}=(|000\rangle+|011\rangle+\sqrt{2}|110\rangle)/2$. This yields $\chi_{\mathcal{N}}(\rho_{AB})=\chi_{\mathcal{N}}(\rho_{AC})=1/5$ and $\chi_{\mathcal{N}}(\rho_{A|BC})=1/3$, where $\rho_{AB(AC)}$ is the reduced state of $\rho_{ABC}$ in subsystems $AB(AC)$. Thus,
\begin{eqnarray}\label{ineq-monogamy}
    \chi^{\alpha}_{\mathcal{N}}(\rho_{AB})+\chi^{\alpha}_{\mathcal{N}}(\rho_{AC}) \nless \chi^{\alpha}_{\mathcal{N}}(\rho_{A|BC})
\end{eqnarray}
when $\alpha=1$.

\subsection{Monogamy inequalities of $\chi_{\widetilde{\mathcal{N}}}^{\alpha}$}\label{sec-AP-extended-ratio-monogamy}

\begin{theorem}
  Let us consider an $N$-qubit quantum system defined in an $N$-qubit Hilbert space $\mathcal{H}_{AB_1B_2\cdots B_{N-1}}$ of $N$ parties $A,B_1,B_2,\dots,B_{N-1}$. Then for $\alpha\geqslant \log_{3\left(\sqrt{2}-1\right)}2\approx 3.191$, the inequality 
  \begin{eqnarray}\label{ineq-ratio}
      \chi_{\widetilde{\mathcal{N}}}^{\alpha}(\rho_{A|B_1B_2\cdots B_{N-1}})\ \geqslant\sum_{n=1}^{N-1}\chi_{\widetilde{\mathcal{N}}}^{\alpha}(\rho_{AB_n})
  \end{eqnarray}
  holds true for $\rho_{AB_1B_2\cdots B_{N-1}}\in\mathbb{S}(\mathcal{H}_{AB_1B_2 \cdots B_{N-1}})$, where $\rho_{AB_n}$ is the reduced state of $\rho_{AB_1B_2\cdots B_{N-1}}$ in subsystem $AB_n$. 
\end{theorem}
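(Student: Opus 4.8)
The plan is to reduce the $N$-party CKW inequality for $\chi_{\widetilde{\mathcal{N}}}^{\alpha}$ to the known monogamy of the convex-roof extended negativity $\widetilde{\mathcal{N}}$ for $N$ qubits, then handle the nonlinear reparametrization $x\mapsto f(x)=[x/(x+1)]^{\alpha}$ by a pointwise function-inequality argument. First I would recall (or invoke from the literature on $\widetilde{\mathcal{N}}$) that for $N$-qubit states $\widetilde{\mathcal{N}}$ itself is monogamous in the squared sense, i.e.\ $\widetilde{\mathcal{N}}^{2}(\rho_{A|B_1\cdots B_{N-1}})\geqslant \sum_{n}\widetilde{\mathcal{N}}^{2}(\rho_{AB_n})$; this is the CKW-type result that makes the whole scheme work. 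Writing $x:=\widetilde{\mathcal{N}}(\rho_{A|B_1\cdots B_{N-1}})$ and $x_n:=\widetilde{\mathcal{N}}(\rho_{AB_n})$, we have $x^{2}\geqslant\sum_n x_n^{2}$ with all quantities in $[0,1]$ for qubits (the two-qubit negativity, hence its convex-roof, is bounded by $1$), and the goal becomes $f(x)\geqslant\sum_n f(x_n)$.

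The key analytic step is the claim: there exists $\alpha_0=\log_{3(\sqrt2-1)}2$ such that for all $\alpha\geqslant\alpha_0$ and all $a,b\geqslant 0$ with $a^2\geqslant b^2$ (more usefully, for the function $h_\alpha(t):=f(\sqrt t)=[\sqrt t/(\sqrt t+1)]^{\alpha}$ on $[0,1]$) one has superadditivity $h_\alpha(s+t)\geqslant h_\alpha(s)+h_\alpha(t)$ whenever $s+t\le 1$. Granting this, the $N$-party inequality follows by induction: $f(x)=h_\alpha(x^2)\geqslant h_\alpha(\sum_n x_n^2)\geqslant\sum_n h_\alpha(x_n^2)=\sum_n f(x_n)$, where the first inequality uses monotonicity of $h_\alpha$ together with $x^2\geqslant\sum x_n^2$, and the second is repeated superadditivity. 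So the real work is proving superadditivity of $h_\alpha$. The standard route is to show $h_\alpha$ is convex on $[0,1]$ and $h_\alpha(0)=0$, since a convex function vanishing at the origin is superadditive; the threshold $\alpha_0$ should be exactly where $h_\alpha''\geqslant 0$ fails to hold on all of $[0,1]$, and the critical constraint is expected to come from the endpoint $t=1$ (i.e.\ maximally entangled two-qubit marginals, where $\sqrt t/(\sqrt t+1)=1/2$), which is presumably how the odd-looking constant $3(\sqrt2-1)$ enters. I would compute $\frac{d^2}{dt^2}h_\alpha(t)$, substitute $u=\sqrt t$, and reduce the sign condition to a one-variable inequality in $u\in(0,1]$ and $\alpha$; checking it at $u=1$ pins down $\alpha_0$, and checking monotonicity in $u$ (or a direct estimate) shows $u=1$ is the worst case.

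The main obstacle will be this convexity/superadditivity computation for $h_\alpha$ and, in particular, verifying that $\alpha\geqslant\log_{3(\sqrt2-1)}2$ is both sufficient and the correct threshold rather than a loose bound — the second derivative of $[\sqrt t/(\sqrt t+1)]^{\alpha}$ is messy, and care is needed near $t=0$ where the $\sqrt t$ produces a $t^{-1/2}$-type singularity in the derivative (one should check superadditivity still holds there, e.g.\ via the limiting behavior $h_\alpha(t)\sim t^{\alpha/2}$, which is superadditive for $\alpha\geqslant 2$). A secondary point to get right is that the quantities lie in $[0,1]$: for general $N$-qubit states the marginal negativities $\widetilde{\mathcal{N}}(\rho_{AB_n})$ are bounded by $1$ and $\widetilde{\mathcal{N}}(\rho_{A|B_1\cdots B_{N-1}})$ need not be, but the CKW chain only requires $\sum_n x_n^2\le x^2$ and $x_n\le 1$, after which we may truncate $x$ to $\min(x,1)$ since $f$ is bounded by $f(1)$ and $\sum_n f(x_n)\le (N-1)f(1)$ need not exceed $f(\min(x,1))$ — this edge case should be stated carefully but is not deep. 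Finally, I would remark that the analogous statement for the $2\times2\times3$ and $2\times2\times2^m$ tripartite systems follows identically once the corresponding $\widetilde{\mathcal{N}}$-monogamy inequality for those systems is invoked, since the function-theoretic step is dimension-independent.
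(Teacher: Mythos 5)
Your overall skeleton is the same as the paper's: invoke the known $N$-qubit monogamy $\widetilde{\mathcal{N}}^{2}(\rho_{A|B_1\cdots B_{N-1}})\geqslant\sum_{n}\widetilde{\mathcal{N}}^{2}(\rho_{AB_n})$, use monotonicity of $f(x)=[x/(x+1)]^{\alpha}$, and reduce everything to an iterated superadditivity statement for $h_\alpha(t)=\bigl[\sqrt{t}/(\sqrt{t}+1)\bigr]^{\alpha}$. But the analytic core --- the part you yourself call ``the real work'' --- is not carried out, and the specific route you propose would not reach the stated threshold. First, a normalization error: with this paper's convention $\mathcal{N}=(\|\rho^{T_A}\|_1-1)/2$, every $2\times d$ state has negativity at most $1/2$, not $1$. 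Hence the relevant domain is $\sum_n x_n^2\leqslant 1/2$ with each $x_n\leqslant 1/2$, and the binding configuration is $x_1=x_2=1/2$ (two maximally entangled marginals): the constant is exactly the solution of $2f(1/2)\leqslant f(1/\sqrt{2})$, i.e.\ $2(1/3)^{\alpha}\leqslant(\sqrt{2}-1)^{\alpha}$, which gives $\alpha\geqslant\log_{3(\sqrt{2}-1)}2$. It does not come from $t=1$ where ``$\sqrt{t}/(\sqrt{t}+1)=1/2$,'' as you guessed.

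Second, and more seriously, your proposed mechanism --- convexity of $h_\alpha$ plus $h_\alpha(0)=0$ implies superadditivity --- provably cannot deliver $\alpha_0\approx3.191$. Setting $u=\sqrt{t}$, a direct computation gives $h_\alpha''(t)=\tfrac{\alpha}{4}\,u^{\alpha-4}(1+u)^{-\alpha-2}\bigl[(\alpha-2)-3u\bigr]$, so $h_\alpha$ is convex only for $u\leqslant(\alpha-2)/3$; convexity up to $t=1/2$ requires $\alpha\geqslant 2+3/\sqrt{2}\approx 4.12$, and up to $t=1$ requires $\alpha\geqslant 5$, both strictly larger than $3.191$. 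The paper instead proves superadditivity via a weaker, star-shapedness-type comparison: in polar coordinates it shows $\bigl[(1+r)\cos\theta/(1+r\cos\theta)\bigr]^{\alpha}\leqslant\cos^2\theta$ (and the sine analogue), which is equivalent to $f(x)/x^2\leqslant f(r)/r^2$ for the constrained pairs $x\leqslant 1/2$, $x\leqslant r\leqslant 1/\sqrt{2}$ that actually occur, and whose worst case over that restricted region is precisely the corner $(x,r)=(1/2,1/\sqrt{2})$ yielding $\alpha_0=\log_{3(\sqrt{2}-1)}2$. (Note that even the ratio $h_\alpha(t)/t$ is not globally nondecreasing on $[0,1/2]$ for $\alpha$ near $3.191$; only the specific pairwise comparisons are needed.) As written, your argument would establish the theorem only for $\alpha\gtrsim 4.12$; to obtain the claimed range you must replace ``convex $\Rightarrow$ superadditive'' by this sharper ratio-monotonicity argument on the correct domain. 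Your final ``truncation'' worry is moot: since $A$ is a qubit, $\widetilde{\mathcal{N}}(\rho_{A|B_1\cdots B_{N-1}})\leqslant 1/2$ as well.
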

\begin{proof}
Inspired by Ref.~\cite{Monogamy2021}, before proving Eq.~\eqref{ineq-ratio}, we need to show that in the region $D=\{(x,y)|0\leqslant x\leqslant a,0\leqslant y\leqslant b, a\leqslant b\}$, the inequality
    \begin{eqnarray}\label{ineq-xya}
        \left[\frac{x}{x+1}\right]^{\alpha}+\left[\frac{y}{y+1}\right]^{\alpha}
        \leqslant\left[\frac{\sqrt{x^2+y^2}}{\sqrt{x^2+y^2}+1}\right]^{\alpha}
    \end{eqnarray}
holds for $\alpha\geqslant g\left(c,b/c\right)$ where $c\equiv\sqrt{a^2+b^2}$, and the  auxiliary function $g$ is defined by
\begin{eqnarray}
    g(r,u):= 2\log_{\frac{(1+r)u}{1+ru}}
    u.
\end{eqnarray}

To prove Eq.~\eqref{ineq-xya}, we first need to show that in the region $D'=\{(r,u)|0\leqslant ru\leqslant a, 0\leqslant r\sqrt{1-u^2}\leqslant b,0\leqslant u\leqslant1\}$, the inequality
\begin{eqnarray}\label{ineq-g1}
    \left[\frac{(1+r)u}{1+ru}\right]^{\alpha}\leqslant u^2
\end{eqnarray}
holds for $\alpha\geqslant g\left(c,a/c\right)$.
To see this, since the Eq.~\eqref{ineq-g1} is true iff $\alpha\geqslant g(r,u)$ is true in $D'$, we only need to prove $\alpha\geqslant \max\limits_{(r,u)\in D'}g(r,u)$.
Due to the fact that $\partial g(r,u)/\partial r>0$ holds in $D'$, we obtain 
\begin{equation}\label{eq-max_g}
   \max\limits_{(r,u)\in D'}g(r,u)=\max\limits_{(c,u)\in D'}g(c,u)=g\left(c,a/c\right).
\end{equation}
Thus for all $\alpha\geqslant g\left(c,a/c\right)$, Eq.~\eqref{ineq-g1} holds in  $D'$ from Eq.~\eqref{eq-max_g}.

Now we prove Eq.~\eqref{ineq-xya}: 
For any point $(x,y)$ in the region $D=\{(x,y)|0\leqslant x\leqslant a,0\leqslant y\leqslant b\}$, we convert it into polar coordinates, $r=\sqrt{x^2+y^2}$, $\cos\theta=x/r$
    and $\sin\theta=y/r$. Let $c=\sqrt{a^2+b^2}$, then the region $D$ can be rewritten as 
    \begin{eqnarray*}
        D''=\{(r,\theta)|0\leqslant r\cos\theta\leqslant a,0\leqslant r\sin\theta\leqslant b, 0\leqslant r\leqslant c\}.
    \end{eqnarray*}
    Then, Eq.~\eqref{ineq-xya} holds iff the inequality 
\begin{eqnarray}\label{ineq-leq}
    \left[\frac{(1+r)\cos{\theta}}{1+r\cos{\theta}}\right]^{\alpha}+\left[\frac{(1+r)\sin{\theta}}{1+r\sin{\theta}}\right]^{\alpha}\leqslant1
\end{eqnarray}
holds in $D''$. 

It is sufficient to prove that
\begin{eqnarray}\label{ineq-cos}
    \left[\frac{(1+r)\cos{\theta}}{1+r\cos{\theta}}\right]^{\alpha}\leqslant\cos^2\theta
\end{eqnarray}
and 
\begin{eqnarray}\label{ineq-sin}
    \left[\frac{(1+r)\sin{\theta}}{1+r\sin{\theta}}\right]^{\alpha}\leqslant\sin^2\theta
\end{eqnarray}
are true in $D''$.
By replacing $u$ in Eq.~\eqref{ineq-g1} with $\cos \theta$, we then have that Eq.~\eqref{ineq-cos} holds true in $D''$ when 
\begin{eqnarray}
    \alpha\geqslant g\left(c,a/c\right).
\end{eqnarray}
Let $u$ be $\sin \theta$ and swap the values of $a$ and $b$, we obtain that Eq.~\eqref{ineq-sin} holds true for all $(r,\theta)\in D''$ when
\begin{eqnarray}
     \alpha\geqslant g\left(c,b/c\right).
\end{eqnarray}
Since $\partial g(r,u)/\partial u\geqslant 0$ and $a\leqslant b$, Eq.~\eqref{ineq-xya} holds true in $D''$ for all $\alpha\geqslant g\left(c,b/c\right)$. 
We complete the proof.

At the end, we prove Eq.~\eqref{ineq-ratio}.
Denote by $\mathcal{H}_{AB}$ an $N$-qubit system shared between $N$ parities $A,B_1,B_2,\dots, B_{N-1}$, and let $B=\{B_1,B_2,\dots, B_{N-1}\}$ denote the collection of $N-1$ parties except $A$.
For $\rho_{AB}\in\mathbb{S}(\mathcal{H}_{AB})$, we denote by $\rho_{AB_n}$ the reduced state of $\rho_{AB}$ in subsystem $AB_n$.
The negativity is known to satisfy the monogamy inequality~\cite{Kim2018,Monogamy2021}
\begin{eqnarray}\label{ineq-N2}
    \widetilde{\mathcal{N}}^2(\rho_{A|B})\geqslant \sum_{n=1}^{N-1}\widetilde{\mathcal{N}}^2(\rho_{AB_n}),
\end{eqnarray}
which implies that
\begin{equation}\label{ineq-XNa}
\begin{small}
 \begin{aligned}
    \chi_{\widetilde{\mathcal{N}}}^{\alpha}(\rho_{A|B})
    &=\left[\frac{\sqrt{\widetilde{\mathcal{N}}^2(\rho_{AB})}}{\sqrt{\widetilde{\mathcal{N}}^2(\rho_{AB})}+1}\right]^{\alpha}\\
    &\geqslant\left[\frac{\sqrt{\sum\limits_{n=1}^{N-1}\widetilde{\mathcal{N}}^2(\rho_{AB_n})}}{\sqrt{\sum\limits_{n=1}^{N-1}\widetilde{\mathcal{N}}^2(\rho_{AB_n})}+1}\right]^{\alpha}
 \end{aligned}
 \end{small}
\end{equation}
holds for all $\alpha>0$, a result of the monotonicity of the function $f(x)= [x/(x+1)]^{\alpha}$.

Note that every state of $(2\times d)$-dimensional systems for $d\geqslant 2$ has the negativity no greater than $1/2$~\cite{Vidal2002}. 
Therefore, from Eq.~\eqref{ineq-N2}, we obtain
\begin{eqnarray}
    \sum_{n=1}^{N-1}\widetilde{\mathcal{N}}^2(\rho_{AB_n})\leqslant \frac{1}{2}.
\end{eqnarray}
Let the parameters in Eq.~\eqref{ineq-xya} be $x=\widetilde{\mathcal{N}}(\rho_{AB_1})$, $y=\sqrt{\sum_{n=2}^{N-1}\widetilde{\mathcal{N}}^2(\rho_{AB_n})}$, $a=1/2$, $b=1/2$ and $c=1/2$.
We find that the inequality 
\begin{equation}\label{ineq-N1}
  \begin{aligned}
    &\left[\frac{\sqrt{\sum\limits_{n=1}^{N-1}\widetilde{\mathcal{N}}^2(\rho_{AB_n})}}{\sqrt{\sum\limits_{n=1}^{N-1}\widetilde{\mathcal{N}}^2(\rho_{AB_n})}+1}\right]^{\alpha}\\
    \geqslant& \left[\frac{\widetilde{\mathcal{N}}(\rho_{AB_1})}{\mathcal{N}(\rho_{AB_1})+1}\right]^{\alpha}
    +\left[\frac{\sqrt{\sum\limits_{n=2}^{N-1}\widetilde{\mathcal{N}}^2(\rho_{AB_n})}}{\sqrt{\sum\limits_{n=2}^{N-1}\widetilde{\mathcal{N}}^2(\rho_{AB_n})}+1}\right]^{\alpha}
  \end{aligned}
\end{equation}
holds for $\alpha\geqslant g\left(1/{\sqrt{2}},1/{\sqrt{2}}\right)= \log_{3\left(\sqrt{2}-1\right)}2$. 
By iteratively applying Eq.~\eqref{ineq-N1} $N-2$ times, we find that the inequality
\begin{equation}\label{ineq-XNa2}
 \begin{aligned}
    \left[\frac{\sqrt{\sum\limits_{n=1}^{N-1}\widetilde{\mathcal{N}}^2(\rho_{AB_n})}}{\sqrt{\sum\limits_{n=1}^{N-1}\widetilde{\mathcal{N}}^2(\rho_{AB_n})}+1}\right]^{\alpha}
    \geqslant&\sum_{n=1}^{N-1}\left[\frac{\widetilde{\mathcal{N}}(\rho_{AB_n})}{\mathcal{N}(\rho_{AB_n})+1}\right]^{\alpha}\\
    =&\sum_{n=1}^{N-1}\chi_{\widetilde{\mathcal{N}}}^{\alpha}(\rho_{AB_n})
 \end{aligned}
\end{equation}
holds for $\alpha\geqslant \log_{3\left(\sqrt{2}-1\right)}2$.
From Eqs.~\eqref{ineq-XNa} and~\eqref{ineq-XNa2}, we thus obtain the monogamy inequality
    \begin{eqnarray}\label{ineq-monogamyN}
        \chi_{\widetilde{\mathcal{N}}}^{\alpha}(\rho_{A|B})\geqslant \sum_{n=1}^{N-1}\chi_{\widetilde{\mathcal{N}}}^{\alpha}(\rho_{AB_n})
    \end{eqnarray}
for $\alpha\geqslant \log_{3\left(\sqrt{2}-1\right)}2$. 
\end{proof}

\begin{theorem}
  Let a tripartite system $\mathcal{H}_{AB_1B_2}$ be $(2\times2\times 3)$-dimensional or $(2\times2\times2^m)$-dimensional. For $\rho_{AB_1B_2}\in\mathbb{S}(\mathcal{H}_{AB_1B_2})$, we denote $\rho_{AB_1(AB_2)}$ as its reduced state in subsystem $AB_1(AB_2)$. Then, similarly, the inequality 
  \begin{eqnarray}\label{ineq-tripartite1}
    \chi_{\widetilde{\mathcal{N}}}^{\alpha}(\rho_{A|B_1 B_2})
        &\geqslant\chi_{\widetilde{\mathcal{N}}}^{\alpha}(\rho_{AB_1})+\chi_{\widetilde{\mathcal{N}}}^{\alpha}(\rho_{AB_2})
  \end{eqnarray}
  holds for all $\alpha\geqslant \log_{3\left(\sqrt{2}-1\right)}2\approx 3.191$.
\end{theorem}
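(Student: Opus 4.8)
The plan is to reuse, specialized to $N=3$, the three-step argument of the preceding $N$-qubit theorem, with the inductive iteration of Eq.~\eqref{ineq-N1} collapsing to a single application. The only genuinely new ingredient is the monogamy of the squared convex-roof extended negativity for the dimensions at hand: for every $\rho_{AB_1B_2}\in\mathbb{S}(\mathcal{H}_{AB_1B_2})$ with $\mathcal{H}_{AB_1B_2}$ of dimension $2\times2\times3$ or $2\times2\times2^m$, one has
\[
  \widetilde{\mathcal{N}}^{2}(\rho_{A|B_1B_2})\geqslant\widetilde{\mathcal{N}}^{2}(\rho_{AB_1})+\widetilde{\mathcal{N}}^{2}(\rho_{AB_2}),
\]
which follows from the negativity-monogamy results of Ref.~\cite{Kim2018} (and the references therein). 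This is precisely where the restriction on $\dim\mathcal{H}_{B_2}$ enters: the bare negativity is not monogamous in such systems, and it is the convex-roof extension together with the squaring that restores the inequality.

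First I would push this inequality through the strictly increasing function $f(x)=[x/(x+1)]^{\alpha}$ of Definition~\ref{def1}, exactly as in Eq.~\eqref{ineq-XNa}, obtaining
\[
  \chi_{\widetilde{\mathcal{N}}}^{\alpha}(\rho_{A|B_1B_2})\geqslant\left[\frac{\sqrt{\widetilde{\mathcal{N}}^{2}(\rho_{AB_1})+\widetilde{\mathcal{N}}^{2}(\rho_{AB_2})}}{\sqrt{\widetilde{\mathcal{N}}^{2}(\rho_{AB_1})+\widetilde{\mathcal{N}}^{2}(\rho_{AB_2})}+1}\right]^{\alpha}.
\]
Next I would record that, since $A$ is a qubit, each reduced state $\rho_{AB_1}$ and $\rho_{AB_2}$ is supported on a $2\times d$ system, so that $\widetilde{\mathcal{N}}(\rho_{AB_1}),\widetilde{\mathcal{N}}(\rho_{AB_2})\leqslant 1/2$ (any pure $2\times d$ state has Schmidt rank at most two, hence negativity at most $1/2$, and the convex roof preserves this bound).

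Finally I would invoke the auxiliary inequality~\eqref{ineq-xya} with $x=\widetilde{\mathcal{N}}(\rho_{AB_1})$, $y=\widetilde{\mathcal{N}}(\rho_{AB_2})$, and $a=b=1/2$, so that $c=\sqrt{a^{2}+b^{2}}=1/\sqrt{2}$ and the admissible range becomes $\alpha\geqslant g\!\left(1/\sqrt{2},1/\sqrt{2}\right)=\log_{3(\sqrt{2}-1)}2$, where $g$ is the auxiliary function introduced in the previous proof. This bounds $\chi_{\widetilde{\mathcal{N}}}^{\alpha}(\rho_{AB_1})+\chi_{\widetilde{\mathcal{N}}}^{\alpha}(\rho_{AB_2})$ above by the right-hand side of the last display, and chaining the two estimates yields Eq.~\eqref{ineq-tripartite1}. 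The only non-mechanical part is the first step---pinning down the CREN-squared monogamy statement with exactly the scope $2\times2\times3$ and $2\times2\times2^m$; once that is in hand, the remainder is a verbatim specialization of the $N$-qubit proof and needs no induction because $N=3$.
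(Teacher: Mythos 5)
Your proposal is correct and follows essentially the same route as the paper: invoke the squared CREN monogamy inequality for $(2\times2\times3)$- and $(2\times2\times2^m)$-dimensional systems, push it through the monotone function $f(x)=[x/(x+1)]^{\alpha}$, and apply the auxiliary inequality~\eqref{ineq-xya} with $a=b=1/2$, $c=1/\sqrt{2}$ (the paper cites Ref.~\cite{Monogamy2021} rather than Ref.~\cite{Kim2018} for the CREN step, but the substance is identical). Your write-up is in fact more explicit than the paper's, which compresses the last two steps into a single sentence.
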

\begin{proof}
Let a tripartite system $\mathcal{H}_{AB_1 B_2}$ be $(2\times2\times 3)$-dimensional or $(2\times2\times2^m)$-dimensional.
For all $\rho_{AB_1 B_2}\in\mathbb{S}(\mathcal{H}_{AB_1 B_2})$,
Ref.~\cite{Monogamy2021} shows that the inequality
\begin{eqnarray}\label{ineq-CREN}
    \widetilde{\mathcal{N}}^2(\rho_{A|B_1 B_2})
        \geqslant
        \widetilde{\mathcal{N}}^{2}(\rho_{AB_1})+\widetilde{\mathcal{N}}^{2}(\rho_{AB_2})
\end{eqnarray}
holds where $\rho_{AB_1(AB_2)}$ is the reduced state of $\rho_{AB_1 B_2}$ in subsystem $AB_1(AB_2)$. 
By iterating Eq.~\eqref{ineq-xya} together with Eq.~\eqref{ineq-CREN}, we find that the monogamous inequality~\eqref{ineq-tripartite1}
holds for all $\alpha\geqslant \log_{3\left(\sqrt{2}-1\right)}2$ because of the monotonicity of function $f(x)=\left[x/\left(x+1\right)\right]$ and the convexity of $\widetilde{\mathcal{N}}$.
\end{proof}

\subsection{Monogamy inequalities of $\chi_{\mathcal{N}}^{\alpha}$}\label{sec-AP-ratio-monogamy}

It is straightforward to see that $\chi_{\widetilde{\mathcal{N}}}^{\alpha}$ is an entanglement measure that satisfies
\begin{eqnarray}\label{eq-relation}
    \chi_{\widetilde{{\mathcal{N}}}}^{\alpha}(\rho_{AB})=\chi_{\mathcal{N}}^{\alpha}(\rho_{AB})
\end{eqnarray}
 for every pure state $\rho_{AB}\in\mathbb{PS}(\mathcal{H}_{AB})$.
Since $\widetilde{\mathcal{N}}(\rho_{AB})\geqslant\mathcal{N}(\rho_{AB})$ by the convexity of $\mathcal{N}$~\cite{Vidal2002},
we obtain
\begin{eqnarray}\label{ineq-RCREN-X}
    \chi_{\widetilde{\mathcal{N}}}^{\alpha}(\rho_{AB})\geqslant\chi_{\mathcal{N}}^{\alpha}(\rho_{AB})
\end{eqnarray}
for a general state $\rho_{AB}\in\mathbb{S}(\mathcal{H}_{AB})$.

Let the $N$-party system $\mathcal{H}_{AB_1B_2\cdots B_{N-1}}$ be the system of $N$-qubit, or $N=3$ with $(2\times 2\times 3)$-dimensional or $(2\times 2\times 2^m)$.
If $\rho_{AB_1B_2 \cdots B_{N-1}}=|\psi\rangle_{AB_1B_2\cdots B_{N-1}}\langle \psi|$ is a pure state, by Eqs.~\eqref{eq-relation}~and~\eqref{ineq-RCREN-X}, the $\alpha$-ratio negativity satisfies
\begin{eqnarray}\label{ineq-convex_roof}
\chi^{\alpha}_{\mathcal{N}}(\rho_{A|B_1B_2\cdots B_{N-1}})
&=& \chi^{\alpha}_{\widetilde{\mathcal{N}}}(\rho_{A|B_1B_2\cdots B_{N-1}})\nonumber\\
    &\geqslant&\sum_{n}\chi^{\alpha}_{\widetilde{\mathcal{N}}}(\rho_{AB_n})
    \nonumber\\
    &\geqslant&\sum_{n}\chi^{\alpha}_{\mathcal{N}}(\rho_{AB_n}).
\end{eqnarray}
Hence, for pure states specifically, the $\alpha$-ratio negativity $\chi_{\mathcal{N}}^{\alpha}$ shows monogamy for $N$-qubit systems, as well as $(2\times2\times 3)$-dimensional and $(2\times2\times2^m)$-dimensional tripartite systems.

\section{Proof of Theorem~\ref{theorem-multiplication}}\label{sec-AP-multiplication}

{Before proving Theorem~\ref{theorem-multiplication}, we introduce a definition of the continuous group operation and the Theorem~3 in Sec.~2.2.4 in Ref.~\cite{Continuous_group_th1966}:}

{Let $G$ be a non-empty set. Then $G$ is a group under the operation $\phi:\ G\times G\rightarrow G$ if $\phi$ satisfies the following conditions:
\begin{enumerate}[label=\arabic*),itemsep=0pt,topsep=0pt,parsep=0pt]
  \item Closure: For any two elements $x,y\in G$, the result of $\phi(x,y)$ is also in the set $G$;
  \item Associative: For any two elements $x,y,z\in G$, $\phi\left(\phi(x,y),z\right)=\phi\left(x,\phi(y,z)\right)$;
  \item Identity element: There exists an identity element $e$ such that for every element $x\in G$, $\phi(x,e)=\phi(e,x)=x$;
  \item Inverse element: For each element $x\in G$, there exists an element $y\in G$ such that $\phi(x,y)=\phi(y,x)=e$, where $e$ is the identity element.
\end{enumerate}
\begin{theorem}\label{theorem-continuous_group}
(Theorem~3 in Sec.~2.2.4 in~\cite{Continuous_group_th1966})
    If in an equation of the form 
    \begin{eqnarray}\label{eq-theorem3}
        f\left[g(x,y)\right]=h\left[f(x),f(y)\right],
    \end{eqnarray}
    one of the functions $h$, $g$, for example, $h(z,w)$, is a continuous group operation for the $z$, $w$ of an interval, then Eq.~\eqref{eq-theorem3} has a continuous, strictly monotonic solution $f$ if and only if the other function (here $g$) is also a continuous group operation.
\end{theorem}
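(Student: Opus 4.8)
The plan is to reduce the conjugacy equation~\eqref{eq-theorem3} to the additive Cauchy form by invoking the classical structure theorem for continuous group operations on a real interval: a binary operation $\phi$ on an interval $I$ is a continuous group operation if and only if there is a homeomorphism $\kappa:I\to\mathbb{R}$ onto the additive reals with $\phi(z,w)=\kappa^{-1}(\kappa(z)+\kappa(w))$. In words, every continuous group structure carried by a one-dimensional interval is isomorphic to $(\mathbb{R},+)$; establishing this representation---which forces commutativity and forces the underlying interval to be the full open line, since a bounded interval cannot be closed under an unbounded group law---is the substantive classical input, and I would invoke it rather than rederive it. Applying it to the hypothesis that $h$ is a continuous group operation, I fix a homeomorphism $k$ with $h(z,w)=k^{-1}(k(z)+k(w))$.

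For sufficiency (if $g$ is a continuous group operation, then a solution $f$ exists), I apply the same representation to $g$, obtaining a homeomorphism $m$ with $g(x,y)=m^{-1}(m(x)+m(y))$. Then I set $f:=k^{-1}\circ m$, which is continuous and strictly monotonic as a composition of such maps, and verify by direct substitution that
\begin{equation*}
f[g(x,y)]=k^{-1}(m(x)+m(y))=h[f(x),f(y)],
\end{equation*}
so that \eqref{eq-theorem3} holds with a continuous, strictly monotonic $f$.

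For necessity (if a continuous strictly monotonic solution $f$ exists, then $g$ is a continuous group operation), I compose \eqref{eq-theorem3} with $k$ to linearize $h$: writing $F:=k\circ f$, the equation becomes the additive relation $F[g(x,y)]=F(x)+F(y)$, whence $g(x,y)=F^{-1}(F(x)+F(y))$. Since $f$ is a continuous strictly monotonic bijection onto the interval carrying $h$, and $k$ maps that interval homeomorphically onto $\mathbb{R}$, the map $F$ is a homeomorphism of the $g$-interval onto $\mathbb{R}$. The displayed representation then exhibits $g$ as a continuous group operation: associativity follows from the symmetry of the triple sum $F(x)+F(y)+F(z)$, the identity is $F^{-1}(0)$, and the inverse of $x$ is $F^{-1}(-F(x))$.

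The step I expect to be the main obstacle is precisely the bookkeeping in the necessity direction that guarantees $F$ is \emph{onto} all of $\mathbb{R}$, which is what upgrades $g$ from a mere associative operation to a genuine group. This rests essentially on the \emph{group} (not merely associative) hypothesis on $h$: the existence of an $h$-identity and of $h$-inverses forces $k(I_h)$ to contain $0$, to be symmetric under negation, and to be closed under addition, so that $k(I_h)=\mathbb{R}$; surjectivity of the monotone $f$ onto $I_h$ then transfers this to $F$. I would dispatch the edge cases---half-open or degenerate intervals, and the two possible orientations of the monotonicity---by checking that each alternative violates a group axiom unless the interval is the full open line, mirroring the argument that underlies the structure theorem itself.
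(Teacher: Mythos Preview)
The paper supplies no proof of this statement; it quotes the theorem from Acz\'el (1966) and immediately applies it with $h(z,w)=zw$ to obtain Theorem~\ref{theorem-multiplication}. There is thus nothing in the paper to compare your proposal against.

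That said, your outline is the standard one, and the sufficiency direction is clean. For necessity, however, the gap you flag is not actually closed by your last paragraph. You argue that $k(I_h)=\mathbb{R}$---which is immediate from the structure theorem you already invoked---and then simply assert ``surjectivity of the monotone $f$ onto $I_h$,'' but that is precisely what is \emph{not} given by the hypothesis ``continuous, strictly monotonic.'' Without it, $F(I_g)$ may be a proper subinterval of $\mathbb{R}$ closed under addition, and then $g$ need not be a group: take $I_g=(0,\infty)$ with $g(x,y)=x+y$, $I_h=\mathbb{R}$ with $h(z,w)=z+w$, and $f$ the inclusion; every hypothesis in the quoted statement is satisfied, yet $g$ has neither identity nor inverses on $(0,\infty)$. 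Your plan to ``check that each alternative violates a group axiom'' cannot help here, since it is exactly the group axioms for $g$ that you are trying to establish. The resolution is that Acz\'el's original formulation carries an additional hypothesis---typically that $f$ maps the $g$-interval onto the $h$-interval, or that both operations live on the same interval---which the paper's quotation elides; you should locate that hypothesis and use it explicitly.
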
}

{Let $h(z,w):=zw$, $x=E(\rho_{SR})$ and $y=E(\rho_{RT})$. Then the function $h(z,w)$ is a continuous group operation for the variables $z$ and $w$. 
Therefore, from Theorem~\ref{theorem-continuous_group}, we obtain the result shown in Theorem~\ref{theorem-multiplication}.}

\newpage
\clearpage

\bibliography{XN_SI}

\end{document}